\DeclareMathOperator{\Suff}{Suff}
\DeclareMathOperator{\Pref}{Pref}
\DeclareMathOperator{\Fact}{Fact}
\renewcommand{\epsilon}{\varepsilon}
\newcommand{\St}{\mathit{St}}
\newcommand{\oc}{\mathrm{oc}}
\newcommand{\wt}{\widetilde}
\renewcommand{\epsilon}{\varepsilon}
\newtheorem{theorem}{Theorem}%[section]
\newtheorem{proposition}[theorem]{Proposition}
\newtheorem{lemma}[theorem]{Lemma}
\newtheorem{corollary}[theorem]{Corollary}
\theoremstyle{definition}
\newtheorem{definition}[theorem]{Definition}
\newtheorem{example}[theorem]{Example}
\newtheorem{remark}[theorem]{Remark}
\begin{document}

%\linenumbers

\sloppy

\begin{frontmatter}

\title{The sequence of open and closed prefixes of a Sturmian word \tnoteref{note1}}
\tnotetext[note1]{Some of the results contained in this paper were presented at the 9th International Conference on Words, WORDS~2013 \cite{DelFi13}.}
 
\author{Alessandro De Luca}
 \ead{alessandro.deluca@unina.it}
% \ead[url]{home page}
 \address{DIETI, Universit\`a degli Studi di Napoli Federico II, Italy}
 
\author{Gabriele Fici\corref{cor1}}
\ead{gabriele.fici@unipa.it}
% \ead[url]{home page}
\address{Dipartimento di Matematica e Informatica, Universit\`a di Palermo, Italy}

\author{Luca Q. Zamboni}
\ead{lupastis@gmail.com}
% \ead[url]{home page}
\address{Universit\'e Claude Bernard Lyon 1, France}

\cortext[cor1]{Corresponding author.}

\journal{Advances in Applied Mathematics}

\begin{abstract}
  A finite word is closed if it contains a factor that occurs both as a prefix and as a suffix but does not have internal occurrences, otherwise it is open. 
  We are interested in the {\it oc-sequence} of a word, which is the binary sequence whose $n$-th element is $0$ if the prefix of length $n$ of the word is open, or $1$ if it is closed. We exhibit results showing that this sequence is deeply related to the combinatorial and periodic structure of a word. In the case of Sturmian words, we show that these are uniquely determined (up to renaming letters) by their oc-sequence. Moreover, we prove that the class of finite Sturmian words is a maximal element with this property in the class of binary factorial languages. We then discuss several aspects of Sturmian words that can be expressed through this sequence. Finally, we provide a linear-time algorithm that computes the oc-sequence of a finite word, and a linear-time algorithm that reconstructs a finite Sturmian word from its oc-sequence.
\end{abstract}

\begin{keyword}
Sturmian word\sep closed word.
\MSC[2010]{68R15}
\end{keyword}

\end{frontmatter}

%%%%%%%%%%%%%%%%%%%%%%%%%%%%%%%%%%%%%%%%%%%%%%%%%%%%%%%%%%%%%%%%%%%%%%%%%%%%%%%%%%%%%%%%%%%
\section{Introduction}
%%%%%%%%%%%%%%%%%%%%%%%%%%%%%%%%%%%%%%%%%%%%%%%%%%%%%%%%%%%%%%%%%%%%%%%%%%%%%%%%%%%%%%%%%%%

In a recent paper with M.~Bucci~\cite{BuDelFi13}, the first two authors dealt with trapezoidal words (a generalization of finite Sturmian words), also with respect to the property of being closed or open.  Let $\Sigma$ be a finite nonempty set (the alphabet). A (finite)  word $w=w[1]w[2]\cdots w[n]$ with $w[i] \in \Sigma$ is  \emph{closed} (also known as \emph{periodic-like}~\cite{CaDel01a}) if it contains a factor that occurs both as a prefix and as a suffix but does not have internal occurrences, otherwise it is \emph{open}. For example, the words $abca$, $ababa$ and $aabaab$ are closed --- any word of length $1$ is closed, the empty word being a factor that occurs both as a prefix and as a suffix but does not have internal occurrences; the words $ab$, $aab$ and $aaba$, instead, are open.

Given a finite or infinite word $w=w[1]w[2]\cdots$, the sequence $\oc(w)$ of open/closed prefixes of $w$, that we refer to as  the \emph{oc-sequence} of $w$, is the binary sequence $c(1)c(2)\cdots$ whose $n$-th element is $1$ if the prefix of $w$ of length $n$ is closed, $0$ if it is open. For example, if $w=abcab$, then $\oc(w)=10011$.

A question that arises naturally  is whether it is possible to reconstruct a word (up to renaming letters) from its oc-sequence. This is not true in general, even when the alphabet is binary. For example, the words $aaba$ and $aabb$ are not isomorphic (i.e., one cannot be obtained from the other by renaming letters), yet they have the same oc-sequence $1100$. As a first result of this paper, we show that if a word is known to be Sturmian, then it can be reconstructed (up to renaming letters) from its oc-sequence. That is, Sturmian words are characterized by their oc-sequences. Moreover, we prove that the class of finite Sturmian words is a maximal element with this property in the class of binary factorial languages. %That is, if two non-isomorphic words have the same sequence of open/closed prefixes, they cannot be both Sturmian.

In~\cite{BuDelFi13}, the authors investigated the structure of the sequence $\oc(F)$ of the Fibonacci word $F$. They proved that the lengths of the runs (maximal subsequences of consecutive equal elements) in $\oc(F)$ form the doubled Fibonacci sequence. We prove in this paper that this doubling property holds for every standard Sturmian word, and describe the sequence $\oc(w)$ of a standard Sturmian word $w$ in terms of the \emph{semicentral} prefixes of $w$, which are the prefixes of the form $u_{n}xyu_{n}$, where $x,y$ are letters and $u_{n}xy$ is an element of the standard sequence of $w$. 
As a consequence, we show that the word $ba^{-1}w$, obtained from a standard Sturmian word $w$ starting with letter $a$ by replacing the first letter with a $b$, can be written as the infinite product of the words $(u_{n}^{-1}u_{n+1})^{2}$, $n\ge 0$. Since the words $u_{n}^{-1}u_{n+1}$ are reversals of standard words, this induces an infinite factorization of $ba^{-1}w$ in squares of reversed standard words. 

We then show how the oc-sequence of a standard Sturmian word of slope $\alpha$ is related to the continued fraction expansion of $\alpha$, both in terms of the convergents and of the continuants of $\alpha$. 

Finally, we provide a linear-time algorithm that computes the oc-sequence of a finite word, and a linear-time algorithm that reconstructs a finite Sturmian word from its oc-sequence.

%%%%%%%%%%%%%%%%%%%%%%%%%%%%%%%%%%%%%%%%%%%%%%%%%%%%%%%%%%%%%%%%%%%%%%%%%%%%%%%%%%%%%%%%%%%
\section{Open and closed words}
%%%%%%%%%%%%%%%%%%%%%%%%%%%%%%%%%%%%%%%%%%%%%%%%%%%%%%%%%%%%%%%%%%%%%%%%%%%%%%%%%%%%%%%%%%%

Let us begin with some notation and basic definitions; for those not included below, we refer the reader to~\cite{BuDelFi13} and~\cite{LothaireAlg}.

Let $\Sigma$ be a finite alphabet. Let $\Sigma^{*}$ and $\widehat{\Sigma}^{*}$ stand respectively for the free monoid and the free group generated by $\Sigma$. Their elements are called \emph{words} over $\Sigma$. The \emph{length} of a word $w$ is denoted by $|w|$. The \emph{empty word}, denoted by $\epsilon$, is the unique word of length zero and is the neutral element of $\Sigma^{*}$ and $\widehat{\Sigma}^{*}$.
If $x\in \Sigma$ and $w\in \Sigma^*$, we let $|w|_x$ denote the number of occurrences of $x$ in $w$. 
 
A \emph{prefix} (resp.~a \emph{suffix}) of a word $w$ is any word $u$ such that $w=uz$ (resp.~$w=zu$) for some word $z$. A \emph{factor} of $w$ is a prefix of a suffix (or, equivalently, a suffix of a prefix) of $w$.  A prefix/suffix/factor of a word is \emph{proper} if it is nonempty and does not coincide with the word itself. The set of prefixes, suffixes and factors of the word $w$ are denoted  by $\Pref(w)$, $\Suff(w)$ and $\Fact(w)$, respectively.
From the definitions, we have that $\epsilon$ is a prefix, a suffix and a factor of any word. 
A \emph{border} of a word $w$ is any word in $\Pref(w)\cap \Suff(w)$ different from $w$. 
An \emph{occurrence} of a factor $u$ in $w$ is a factorization $w=vuz$. An occurrence of $u$ is \emph{internal} if both $v$ and $z$ are nonempty. 

A  \emph{period} of a nonempty word $w$ is an integer of the form $|w|-|u|$, where $u$ is a border of $w$. We call \emph{the} period of $w$ the least of its periods, that is the difference between the length of $w$ and the length of its longest border. Conventionally, the period of $\varepsilon$ is 1. The ratio between the length and the period of a word $w$ is called the \emph{exponent} of $w$. %A word is called  a \emph{power} if it has integer exponent greater than $1$.

A factor $v$ of a word $w$ is \emph{left special in $w$} (resp.~\emph{right special in $w$}) if there exist $a,b\in\Sigma$ such that $av$ and $bv$ are factors of $w$ (resp.~$va$ and $vb$ are factors of $w$). A \emph{bispecial factor} of $w$ is a factor that is both left and right special.

The word $\wt{w}$ obtained by reading $w$ from right to left is called the \emph{reversal} (or \emph{mirror image}) of $w$. A \emph{palindrome} is a word $w$ such that $\wt{w}=w$.
In particular, the empty word is  a palindrome. 

An \emph{infinite word} $w$ over $\Sigma$ is a sequence $w:\mathbb N_+\to\Sigma$, written as $w=w[1]w[2]\dotsm w[n]\dotsm$. Prefixes and factors of infinite words are naturally defined, as is the product $uw$ of a finite word $u$ and an infinite word $w$.
Let $\Sigma^{\omega}$ denote the set of infinite words over $\Sigma$. If $u$ is a finite nonempty word, $u^\omega$ denotes the periodic word $uuu\dotsm\in \Sigma^{\omega}$. An infinite word $w$ is said to be \emph{ultimately periodic} if there exist two finite words $v$ and $u$ such that $w=vu^\omega$; an \emph{aperiodic} word is an infinite word that is not ultimately periodic.
An infinite word $w$ is \emph{recurrent} if every factor of $w$
occurs infinitely often; equivalently, $w$ is recurrent if and only if every prefix of $w$ has a second occurrence in $w$.

We recall the definitions of open and closed words given in \cite{Fi11}:

\begin{definition}\label{def:closed}
A finite word $w$ is \emph{closed} if it is empty or has a factor $v\neq w$ occurring exactly twice in $w$, as a prefix and as a suffix of $w$ (with no internal occurrences). A word that is not closed is called \emph{open}.
\end{definition}

%The word $aba$ is closed, since its factor $a$ appears only as a prefix and as a suffix. The word $abaa$, on the contrary, is not closed. 
For any letter $a\in \Sigma$ and for any $n>0$, the word $a^{n}$ is closed, $a^{n-1}$ being a factor occurring only as a prefix and as a suffix in it (this includes the special case of single letters, for which $n=1$ and $a^{n-1}=\epsilon$). 
More generally, every word whose exponent is at least $2$ is closed \cite[Proposition~4]{BaFiLi15}.

\begin{remark}
The notion of closed word is equivalent to that of \emph{periodic-like} word \cite{CaDel01a}. A word $w$ is periodic-like if its longest repeated prefix is not right special.

The notion of closed word is also closely related to the concept of \emph{complete return} to a factor, as considered in \cite{GlJuWiZa09}. A complete return to the factor $u$ in a word $w$ is any factor of $w$ having exactly two occurrences of $u$, one as a prefix and one as a suffix. Hence, $w$ is closed if and only if it is a complete return to one of its factors; such a factor is clearly both the longest repeated prefix and the longest repeated suffix of $w$ (i.e., the longest border of $w$). 
\end{remark}
 
\begin{remark}\label{obs}
Let $w$ be a nonempty word over $\Sigma$. The following characterizations of closed words follow easily from the definition:
 
\begin{enumerate}
 \item the longest repeated prefix (resp.~suffix) of $w$ does not have internal occurrences in $w$, i.e., occurs in $w$ only as a prefix and as a suffix;
  %\item the longest repeated suffix of $w$ does not have internal occurrences in $w$, i.e., occurs in $w$ only as a suffix and as a prefix;
  \item the longest repeated prefix (resp.~suffix) of $w$ is not a right (resp.~left) special factor of $w$;
  %\item the longest repeated suffix of $w$ is not a left special factor of $w$;
 \item $w$ has a border that does not have internal occurrences in $w$;
 \item the longest border of $w$ does not have internal occurrences in $w$.
\end{enumerate}

Obviously, the negations of the previous properties characterize open words. In the rest of the paper we will use these characterizations freely and without explicit mention to this remark.
\end{remark}

We conclude this section with some results on right extensions.

\begin{lemma}\label{lem:nbo}
 Let $w$ be a nonempty word over $\Sigma$, and $x\in\Sigma$ be such that
 $wx$ is closed. Then $wx$ has the same period as $w$.
\end{lemma}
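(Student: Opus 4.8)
The plan is to recast the claim in terms of borders, via the standard dictionary that the period of a nonempty word equals its length minus the length of its longest border. Writing $p=|w|-|b|$ for the period of $w$, where $b$ is the longest border of $w$, the whole lemma reduces to showing that the longest border of $wx$ has length $|b|$.

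First I would identify the longest border of $wx$. Since $w$ is nonempty we have $|wx|\ge 2$, so the closedness of $wx$ forces its longest border $v$ to be nonempty; being a proper border, $v$ has length at most $|w|$ and is therefore a prefix of $w$. As $v$ is also a suffix of $wx$ it ends in $x$, so $v=ux$ where $u$ is a prefix of $w$ (namely $v$ without its last letter) and simultaneously a suffix of $w$ (delete $x$ from the suffix occurrence of $v$); thus $u$ is a border of $w$. Hence the period of $wx$ equals $|wx|-|v|=|w|-|u|$, and since $|u|\le|b|$ this is already at least $p$. It therefore suffices to prove that $u=b$.

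The crucial step is to rule out $|u|<|b|$, which I would handle by contradiction: assuming $|u|<|b|$, I would exhibit an internal occurrence of $v$ in $wx$, contradicting the characterization of the closed word $wx$ by the absence of internal occurrences of its longest border (Remark~\ref{obs}). The key observation is that $v=ux$ is a prefix of $w$ of length $|u|+1\le|b|$, hence a prefix of $b$; and since $b$ is a suffix of $w$, both $b$ and its prefix $v$ occur in $w$ starting at position $p+1$. This occurrence of $v$ begins at position $p+1>1$ and ends at position $p+|u|+1$, which is at most $|w|<|wx|$ precisely because $|u|+1\le|b|=|w|-p$; thus $wx$ has nonempty portions flanking this occurrence on both sides, so it is genuinely internal, the desired contradiction.

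I expect the only delicate point to be this final positional bookkeeping---verifying that the copy of $v$ lodged inside the suffix copy of $b$ is distinct from the prefix occurrence and stops short of the end of $wx$---since this is exactly where the strict inequality $|u|<|b|$ is consumed. The remaining ingredients, namely the border--period dictionary and the shape $v=ux$ of the longest border of $wx$, are entirely routine.
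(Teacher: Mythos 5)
Your proof is correct and follows essentially the same route as the paper's: both identify the longest border of $wx$ as $ux$ with $u$ a border of $w$, and both derive a contradiction from $|u|<|b|$ by locating a copy of $ux$ inside the suffix occurrence of the longest border $b$ of $w$, which yields an internal occurrence of the longest border of the closed word $wx$. Your version merely spells out the positional bookkeeping that the paper leaves implicit.
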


\begin{proof}
Let $vx$ be the longest border of $wx$, and $v'$ be the longest border of $w$. By contradiction, suppose $|v'|>|v|$. Then $vx$ is a prefix of $v'$, and therefore has an internal occurrence in $wx$, contradicting the hypothesis that $wx$ is closed. Hence, $v$ is the longest border of $w$, so that $w$ and $wx$ have the same period $|w|-|v|$.
\end{proof}

\begin{lemma}\label{cor:nbo}
For all nonempty $w\in\Sigma^{*}$, there exists at most one letter $x\in\Sigma$ such that $wx$ is closed.
\end{lemma}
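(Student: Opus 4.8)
The plan is to promote Lemma~\ref{lem:nbo} into a uniqueness statement by showing something stronger than the claim: if $wx$ is closed, then the extending letter $x$ is \emph{completely determined} by $w$ alone. Once the letter is forced, the conclusion ``at most one such $x$'' is immediate. The whole argument hinges on the observation that closedness of $wx$ fixes the period of $wx$ (via Lemma~\ref{lem:nbo}), and the period in turn fixes the length of the longest border of $wx$; it then remains to read off $x$ from that border.

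Concretely, I would let $v$ be the longest border of $w$, so that the period of $w$ equals $|w|-|v|$. Assuming $wx$ is closed, Lemma~\ref{lem:nbo} tells us that $wx$ has this same period, whence its longest border $z$ has length $|z| = (|w|+1)-(|w|-|v|) = |v|+1$. I would then use the two roles of $z$ simultaneously. On one hand, $z$ is a nonempty suffix of $wx$, so its last letter is precisely $x$. On the other hand, $z$ is a prefix of $wx$ of length $|v|+1 \le |w|$, hence $z$ is in fact a prefix of $w$, and its last letter is the $(|v|+1)$-th letter $w[|v|+1]$ of $w$. Comparing the two readings gives $x = w[|v|+1]$. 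Since $v$, and therefore the position $|v|+1$, is determined by $w$, the letter $x$ is determined by $w$, and no two distinct letters can both extend $w$ to a closed word.

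I expect the only delicate point to be justifying that $z$ is genuinely a prefix of $w$ rather than merely a prefix of $wx$; this rests on the inequality $|v|+1 \le |w|$, which holds because a border is by definition a proper prefix, so $|v|\le |w|-1$. The one case worth a sanity check is the empty-border case $v=\epsilon$ (in particular when $w$ is a single letter), where the argument degenerates to $x = w[1]$; here $|v|+1 = 1 \le |w|$ still holds, so no separate treatment is needed and the formula remains valid. Everything else is a routine unwinding of the definitions of period and border.
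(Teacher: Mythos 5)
Your proposal is correct and fills in exactly the "straightforward" step the paper leaves implicit: Lemma~\ref{lem:nbo} forces the period of $wx$, hence the length $|v|+1$ of its longest border, and reading that border once as a suffix and once as a prefix of $w$ pins down $x=w[|v|+1]$. This is the same approach as the paper's, just written out in full, and your check that $|v|+1\le|w|$ (so the border really sits inside $w$) is the right point to verify.
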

\begin{proof}
Straightforward after Lemma~\ref{lem:nbo}.
\end{proof}

If $w$ is closed, then exactly one such extension is closed.
More precisely, we have the following result (see also~\cite[Prop.~4]{CaDel01a}).

\begin{lemma}\label{lem:ce}
 Let $w$ be a closed word. Then $wx$, $x\in \Sigma$, is closed if and only if $wx$ has the same period as $w$.
\end{lemma}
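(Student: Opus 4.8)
The statement is a biconditional, and one direction is already essentially in hand. The forward implication---if $wx$ is closed then it has the same period as $w$---is exactly the content of Lemma~\ref{lem:nbo} when $w$ is nonempty; the degenerate case $w=\epsilon$ is trivial, since then $wx$ is a single letter (hence closed) whose period $1$ equals the conventional period of $\epsilon$. So all the real work lies in the converse, and it is precisely there that the hypothesis that $w$ is \emph{closed} (never used in Lemma~\ref{lem:nbo}) should come into play.

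For the converse I would assume $w$ nonempty and that $wx$ has the same period $p$ as $w$. Let $v$ be the longest border of $w$, so $p=|w|-|v|$ and, because $w$ is closed, $v$ has no internal occurrence in $w$ (Remark~\ref{obs}). My first step is to translate the period hypothesis into border data: the longest border $b$ of $wx$ has length $|wx|-p=|v|+1$. The key preliminary is to identify this border explicitly as $b=vx$. Indeed, $b$ is a suffix of $wx$ and hence ends in $x$, say $b=b'x$ with $|b'|=|v|$; and $b$ is a prefix of $wx$ of length $|v|+1\le|w|$, hence a prefix of $w$, so $b'$ is a prefix of $w$. Deleting the trailing $x$ from the suffix $b=b'x$ of $wx$ shows $b'$ is also a suffix of $w$. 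Thus $b'$ is a border of $w$ of length $|v|$, and since a word has at most one border of each given length, $b'=v$ and $b=vx$.

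It then remains to prove that $vx$ has no internal occurrence in $wx$, for then $wx$ has a border without internal occurrences and is closed by Remark~\ref{obs}. Here I would argue by contradiction with a position count: an internal occurrence would give a factorization $wx=\alpha\,vx\,\beta$ with $\alpha,\beta$ nonempty. Since $|\beta|\ge1$, this occurrence of $vx$ ends at a position at most $|w|$, so it lies entirely inside $w$; discarding its last letter, its prefix $v$ occurs in $w$ starting at position $|\alpha|+1\ge2$ and ending at position $|\alpha|+|v|\le|w|-1$. This would be an internal occurrence of $v$ in $w$---neither a prefix (it starts after position $1$) nor a suffix (it ends before position $|w|$)---contradicting that $w$ is closed.

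The main obstacle, and essentially the only place demanding care, is this final position bookkeeping: one must verify that the forced occurrence of $v$ is genuinely internal and genuinely inside $w$, which is exactly where the nonemptiness of both $\alpha$ and $\beta$ is used. The degenerate situations are then easily dispatched: if $v=\epsilon$, closedness of $w$ forces $|w|=1$, and $vx$ is too short to have an internal occurrence in the length-$2$ word $wx$, so the conclusion holds directly.
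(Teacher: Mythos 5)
Your proof is correct and follows essentially the same route as the paper's: the forward implication is delegated to Lemma~\ref{lem:nbo}, and the converse is obtained by identifying the longest border of $wx$ as $vx$ (with $v$ the longest border of $w$) and checking that an internal occurrence of $vx$ in $wx$ would yield an internal occurrence of $v$ in $w$. The paper states this last step in one line without the position bookkeeping; your version just makes that verification, and the degenerate cases, explicit.
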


\begin{proof}
The case $w=\varepsilon$ is trivially verified, so
let $w$ be a nonempty closed word and $v$ be its longest border%; in particular, $v$ is the longest repeated prefix of $w$
. Let $x$ be the letter such that $wx$ is has the same period as $w$, i.e., such that $vx$ is a prefix of $w$. Then $wx$ is closed, as its border $vx$ cannot have internal occurrences.
The converse follows from Lemma~\ref{lem:nbo}.
\end{proof}

For more details on open and closed words and related results the reader can see~\cite{CaDel01a,BuDelDel09,Fi11,BuDelFi13,Ba+16}.

%%%%%%%%%%%%%%%%%%%%%%%%%%%%%%%%%%%%%%%%%%%%%%%%%%%%%%%%%%%%%%%%%%%%%%%%%%%%%%%%%%%%%%%%%%%
\section{The oc-sequence of a word}
%%%%%%%%%%%%%%%%%%%%%%%%%%%%%%%%%%%%%%%%%%%%%%%%%%%%%%%%%%%%%%%%%%%%%%%%%%%%%%%%%%%%%%%%%%%

We now define the oc-sequence of a word.

\begin{definition}
 Let $w=w[1]w[2]\dotsm w[n]\dotsm$ be a finite or infinite word over $\Sigma$. We define $\oc(w)=c(1)c(2)\dotsm c(n)\dotsm$, called the \emph{oc-sequence} of $w$, as the binary sequence whose $n$-th element is $0$ if the prefix of length $n$ of $w$ is open, or $1$ if it is closed.
\end{definition}
For example, if $w=abaaab$, then $\oc(w)=101001$.

\begin{remark}
\label{rem:nth1}
By definition of closed word, for each integer $n\geq 1$, the $(n+1)$-st occurrence of 1 in $\oc(w)$ is at the position corresponding to the end of the second occurrence of the prefix of length $n$ in $w$. Hence, if a finite word $w$ admits a border of length $\ell$, then
$|\oc(w)|_1\geq \ell+1.$ 

In particular, a closed word $w$ is a complete return to its prefix of length $|\oc(w)|_1-1$; equivalently, the period of a closed word $w$ is equal to $1+|\oc(w)|_0$.
\end{remark}

In the following two propositions we relate recurrence and periodicity of an infinite word with analogous properties of its oc-sequence.

\begin{proposition}
\label{prop:arrayric}
Let $w\in\Sigma^\omega$. The following are equivalent:
\begin{enumerate}
    \item\label{arrayric} $\oc(w)$ is recurrent;
    \item\label{costante} $w=x^\omega$ for a letter $x\in \Sigma$;
    \item\label{arraycost} $\oc(w)=1^\omega$.
\end{enumerate}
\end{proposition}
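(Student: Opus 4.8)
The plan is to prove the equivalences by establishing the cycle $\eqref{costante}\Rightarrow\eqref{arraycost}\Rightarrow\eqref{arrayric}\Rightarrow\eqref{costante}$, which keeps each implication as elementary as possible. The implication $\eqref{arraycost}\Rightarrow\eqref{arrayric}$ is immediate, since the constant sequence $1^\omega$ is trivially recurrent (every factor is $1^k$ for some $k$, and each occurs infinitely often). So the real content lies in the other two arrows.

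For $\eqref{costante}\Rightarrow\eqref{arraycost}$, I would simply observe that every prefix of $x^\omega$ is of the form $x^n$, and as noted in the excerpt just after Definition~\ref{def:closed}, each power $x^n$ is closed. Hence every entry of $\oc(x^\omega)$ is $1$, giving $\oc(w)=1^\omega$.

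The main obstacle is the implication $\eqref{arrayric}\Rightarrow\eqref{costante}$, i.e.\ deducing that $w$ is a constant word from the mere recurrence of $\oc(w)$. The key idea is to argue that recurrence of $\oc(w)$ forces $\oc(w)=1^\omega$ and then to extract the structural conclusion about $w$. First I would show that if $\oc(w)$ contains at least one $0$, then it cannot be recurrent. Suppose position $n$ is the \emph{first} index with $c(n)=0$, so the prefix of length $n$ is open while all shorter prefixes are closed; in particular $0$ does not occur in the prefix $c(1)\cdots c(n-1)=1^{n-1}$. To violate recurrence I want to produce a factor of $\oc(w)$ that occurs only finitely often. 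The natural candidate is the prefix $1^{n-1}0$ itself: I would argue that the block $1^{n-1}0$ (or a suitable initial factor capturing ``$n-1$ closed prefixes followed by an open one from the start'') cannot reappear later, because a second occurrence of any factor of $\oc(w)$ beginning at the very start would correspond to a repetition of the open/closed pattern that is incompatible with the monotone growth of borders forced by Lemmas~\ref{lem:nbo} and \ref{lem:ce}. Concretely, once a prefix is open, the period strictly exceeds that of the preceding closed prefixes, and the position of the first $0$ is an invariant of $w$ that no later window can reproduce.

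Having forced $\oc(w)=1^\omega$, I then translate this back to the structure of $w$: every prefix of $w$ is closed, and by Lemma~\ref{lem:ce} each one-letter extension preserving closedness preserves the period. Since the prefix of length $1$ is $w[1]=x$ with period $1$, an inductive argument using Lemma~\ref{lem:ce} shows that every prefix retains period $1$, and a word whose every prefix has period $1$ is necessarily a power of a single letter, giving $w=x^\omega$. The delicate point, and the step I expect to require the most care, is the finite-occurrence argument ruling out recurrence when a $0$ is present; I would handle it by carefully exploiting Remark~\ref{rem:nth1}, which pins down exactly where the $1$'s in $\oc(w)$ sit relative to border lengths, so that a recurrent $\oc(w)$ would force infinitely many repetitions of the initial prefix of $w$ with a rigid periodic structure incompatible with the single isolated first $0$.
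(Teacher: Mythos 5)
Your cycle of implications is fine in outline, and the easy arrows are all correct: $2\Rightarrow 3$ because every $x^n$ is closed, $3\Rightarrow 1$ trivially, and the recovery of $w=x^\omega$ from $\oc(w)=1^\omega$ via Lemma~\ref{lem:nbo} and induction on the period is sound. The gap is in the only substantive step, ruling out a $0$ in a recurrent $\oc(w)$. Your candidate factor $1^{n-1}0$, where $n$ is the position of the first $0$, does in general reappear: for the Fibonacci word one has $\oc(F)=1\,0\,1\,0\,1^2\,0^2\,1^3\,0^3\,1^5\,0^5\cdots$, so the first $0$ is at position $2$ and the factor $10$ occurs at the junction of every block $1^k0^k$, hence infinitely often. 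So the claim that ``the position of the first $0$ is an invariant that no later window can reproduce'' is false, and the appeal to monotone growth of periods cannot rescue it: a later occurrence of a factor of $\oc(w)$ encodes information about \emph{prefixes} of $w$ ending in that window, not about a factor of $w$ sitting at that position, so no repetition of the initial open/closed pattern of $w$ is forced by a reoccurrence of $1^{n-1}0$.

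The paper's proof runs in the opposite direction and uses recurrence positively rather than hunting for a non-recurring factor. If $\oc(w)$ is recurrent and contains a $0$, then some $10^t1$ is a factor and hence occurs infinitely often; choosing an occurrence preceded by $m$ ones with $m$ arbitrarily large and applying Remark~\ref{rem:nth1} to the $(m+1)$-st and $(m+2)$-nd occurrences of $1$, one finds two occurrences of the length-$m$ prefix $u$ of $w$ at distance $t$, so $u$ has period $t$. Letting $m\to\infty$ shows that $w$ itself has period $t$, whence $\oc(w)$ ends in $1^\omega$ by Lemma~\ref{lem:ce} --- contradicting recurrence together with the presence of a $0$. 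Your closing sentence gestures at exactly this use of Remark~\ref{rem:nth1}, but the argument you actually propose (a finitely-occurring initial factor) is not the one that works, so as written the proof of $1\Rightarrow 2$ is incomplete.
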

\begin{proof}
Clearly, $\ref{costante}\Leftrightarrow\ref{arraycost}\Rightarrow\ref{arrayric}$. To complete the proof, we show that $\ref{arrayric}\Rightarrow\ref{arraycost}$. Let then $\oc(w)$ be recurrent, and suppose by contradiction that $0$ occurs in it. Thus, there exists a positive integer $t$ such that $10^t1$ occurs infinitely often in $\oc(w)$. Hence, for every $n\geq t$, there exists $P$ such that $P10^t1$ is a prefix of $\oc(w)$ and $|P|_1=:m\geq n$. Let $u$ be the prefix of length $m$ of $w$; by Remark~\ref{rem:nth1}, we obtain that the prefixes of $w$ of length $|P1|$ and $|P10^t|$ both have $u$ as a suffix. We have found two occurrences of $u$ at distance $t$ from each other, so that $u$ must have $t$ as a period. Since $n$ is arbitrary and $|u|=m\geq n$, it follows that $w$ has period $t$, so that $\oc(w)$ ends in $1^\omega$ as a consequence of Lemma~\ref{lem:ce}. This contradicts the hypothesis that $\oc(w)$ is recurrent and contains $0$.
\end{proof}

\begin{proposition}
Let $w\in\Sigma^\omega$. The sequence $\oc(w)$ is ultimately periodic if and only if $w$ is either periodic or not recurrent. In the first case, $\oc(w)$ ends
in $1^\omega$, while in the latter case it ends  in $0^\omega$.
\end{proposition}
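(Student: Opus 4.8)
The plan is to prove both implications while simultaneously identifying which of the two asymptotic behaviors ($1^\omega$ or $0^\omega$) occurs, since the statement couples the equivalence with a dichotomy. I would organize the argument around the two possible ``tails'' of $\oc(w)$ and relate each to a structural property of $w$. The key preliminary observation is that by Proposition~\ref{prop:arrayric}, if $0$ does not occur infinitely often in $\oc(w)$, then $\oc(w)$ ends in $1^\omega$; and the proof of that proposition already shows that $\oc(w)$ ending in $1^\omega$ forces $w$ to be (eventually) periodic via Lemma~\ref{lem:ce}. So the $1^\omega$ case is essentially handled by reusing the machinery of Proposition~\ref{prop:arrayric}.

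For the forward direction, suppose $\oc(w)$ is ultimately periodic, say $\oc(w)=P Q^\omega$ for finite words $P,Q$ with $Q$ nonempty. There are two cases according to whether $Q$ contains the symbol $0$. If $Q$ contains only $1$'s, then $\oc(w)$ ends in $1^\omega$, and I would argue as in Proposition~\ref{prop:arrayric} that each closed prefix extends to a closed prefix of the same period (Lemma~\ref{lem:ce}), whence $w$ is periodic. If $Q$ contains at least one $0$, then $0$ occurs infinitely often, so there is a fixed $t\ge 1$ with $10^t1$ recurring; but now, crucially, because the sequence is \emph{ultimately periodic}, the \emph{number of $1$'s between consecutive occurrences of the pattern is bounded}, or more usefully the positions of $1$'s grow in a controlled way. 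I would use Remark~\ref{rem:nth1}, which says the $(n+1)$-st occurrence of $1$ marks the end of the second occurrence of the length-$n$ prefix: if $|\oc(w)|_1$ is finite (only finitely many $1$'s), then only finitely many prefixes of $w$ recur, so $w$ is not recurrent. If instead there are infinitely many $1$'s but also infinitely many $0$'s arranged periodically, I would derive a contradiction or show this situation collapses into non-recurrence, by tracking how the gaps between $1$'s (which equal periods of closed prefixes, by Remark~\ref{rem:nth1}) behave under ultimate periodicity.

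For the converse, I would treat the two hypotheses separately. If $w$ is periodic, then $\oc(w)$ ends in $1^\omega$: once a prefix is long enough to ``see'' the full period $p$, every longer prefix is closed because it has its length-$(\text{len}-p)$ prefix as a border with no internal occurrence—here I would invoke that words of exponent $\ge 2$ are closed, as noted after Definition~\ref{def:closed}. If $w$ is not recurrent, then some prefix $u$ of $w$ has only finitely many occurrences; beyond the position of its last occurrence, no prefix of $w$ can have $u$ (or any longer prefix) as a border, and by Remark~\ref{rem:nth1} the supply of $1$'s in $\oc(w)$ is exhausted, so $\oc(w)$ ends in $0^\omega$, which is trivially ultimately periodic.

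The main obstacle I anticipate is the forward direction in the mixed case where $Q$ genuinely contains both $0$'s and $1$'s: I must rule out a ``balanced'' ultimately periodic oc-sequence that is neither eventually $1^\omega$ nor eventually $0^\omega$. The right tool is Remark~\ref{rem:nth1}: the gap (number of $0$'s plus one) before each $1$ records the period of the corresponding closed prefix, and periods of prefixes of a fixed infinite word are monotone nondecreasing in a constrained way. Under ultimate periodicity of $\oc(w)$ these gaps are eventually periodic, forcing the periods of successive closed prefixes to be bounded; a bounded period for arbitrarily long closed prefixes means $w$ is periodic, sending us back to the $1^\omega$ tail and contradicting the presence of infinitely many $0$'s. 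Making this boundedness-of-periods argument fully rigorous—precisely relating the combinatorics of $Q$ to the period sequence of $w$ via Lemma~\ref{lem:nbo} and Lemma~\ref{lem:ce}—is the delicate step.
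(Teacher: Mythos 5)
Your overall plan coincides with the paper's: the ``if'' direction is handled by exponent~$\ge 2$ (periodic case) and by the exhaustion of $1$'s via Remark~\ref{rem:nth1} (non-recurrent case), and the ``only if'' direction splits on whether the periodic part $Q$ of $\oc(w)$ contains only $1$'s (then Lemma~\ref{lem:ce} gives periodicity), only $0$'s (then some prefix never recurs), or both, with the mixed case eliminated by the $10^t1$ argument from the proof of Proposition~\ref{prop:arrayric}. That is exactly the paper's proof, and most of your outline is correct. (One small slip: ``$Q$ contains at least one $0$'' does not by itself give a recurring factor $10^t1$ --- $Q$ could be $0^k$ --- but you do handle the finitely-many-$1$'s subcase separately afterwards, so the case analysis is recoverable.)

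The one step that would fail as literally written is your treatment of the mixed case: you claim that ``the gaps between $1$'s \ldots equal periods of closed prefixes'' and that ultimate periodicity forces ``the periods of successive closed prefixes to be bounded.'' By Remark~\ref{rem:nth1} the period of the closed prefix ending at the $(n+1)$-st occurrence of $1$ is $1$ plus the \emph{total} number of $0$'s preceding that position, not the length of the preceding gap; in the mixed case this quantity tends to infinity, so no contradiction arises from boundedness of those periods. The quantity that is actually controlled is different. If $P10^t1$ is a prefix of $\oc(w)$ with $m=|P|_1$, then by Remark~\ref{rem:nth1} the prefix $u$ of $w$ of length $m$ is a suffix of both $w[1]\cdots w[|P1|]$ and $w[1]\cdots w[|P10^t|]$, i.e.\ $u$ has two occurrences at distance $t$ and hence admits $t$ as a period. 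Since $10^t1$ recurs with $t$ fixed while $m\to\infty$, arbitrarily long prefixes of $w$ have period $t$, so $w$ is periodic and $\oc(w)$ ends in $1^\omega$ by Lemma~\ref{lem:ce}, contradicting the presence of infinitely many $0$'s. This is precisely the computation in the proof of Proposition~\ref{prop:arrayric} that you cite at the outset, so the repair is to invoke it verbatim rather than substitute the boundedness-of-periods claim.
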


\begin{proof}
The ``if'' part is immediate. Let us then prove the ``only if'' part; let $\oc(w)=UV^\omega$. Suppose first that $1$ does not occur in $V$. Then $\oc(w)$ ends in $0^\omega$, so that $w$ has prefixes that have no other occurrences in $w$; hence, $w$ is not recurrent. If $0$ does not occur in $V$, then $\oc(w)$ ends in $1^\omega$ so that $w$ is periodic as a consequence of Lemma~\ref{lem:ce}. Finally, suppose that both $1$ and $0$ occur in $V$. Then there exists a positive integer $t$ such that $10^t1$ occurs infinitely often in $\oc(w)$; as we have seen in the proof of Proposition~\ref{prop:arrayric}, this leads to a contradiction.
\end{proof}

The following lemma shows that in the sequence $\oc(w)$ any run of $0$s is at least as long as the previous run of $1$s. It will be useful in what follows.

\begin{lemma}
\label{lem:ts}
Given positive integers $s$ and $t$, if $1^t0^s1$ is a factor of $\oc(w)$ then $t\leq s$.
\end{lemma}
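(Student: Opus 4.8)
The plan is to pin down the two closed prefixes that bracket the block of zeros and then argue that the longest border of the right one cannot reoccur too soon. Read the factor $1^t0^s1$ off from the prefixes $u_m:=w[1]\cdots w[m]$, so that $u_{j-t+1},\dots,u_j$ are closed, $u_{j+1},\dots,u_{j+s}$ are open, and $u_k$ is closed, where $k=j+s+1$. I would first record three facts supplied by the earlier results. By iterating Lemma~\ref{lem:nbo}, the closed prefixes $u_{j-t+1},\dots,u_j$ all share one period $p$. Since $u_j$ is closed of period $p$ while $u_{j+1}$ is open, Lemma~\ref{lem:ce} says that the period-preserving letter fails to appear, i.e.\ $w[j+1]\neq w[j+1-p]$. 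Finally I count the ones: by Remark~\ref{rem:nth1} the number of ones in the oc-sequence of a closed prefix is its longest border length plus one, and the stretch from position $j+1$ to $k$ reads $0^s1$ and so contributes exactly one new $1$; hence $u_k$ has longest border $b':=u_{j-p+1}$ (one longer than the border $u_{j-p}$ of $u_j$) and period $k-|b'|=p+s$.

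Next I would reduce $t\le s$ to a statement about occurrences of $b'$. As $u_k$ is closed, $b'$ occurs in $u_k$ only as prefix and suffix, and this suffix occurrence is the first reoccurrence of $b'$ in $w$; it starts at position
\[
k-|b'|+1=(j+s+1)-(j-p+1)+1=p+s+1.
\]
Therefore it suffices to show that $b'$ has no occurrence starting in $\{2,\dots,p+t\}$, for then $p+s+1\ge p+t+1$, that is $s\ge t$.

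The heart of the matter, and the step I expect to be the main obstacle, is this no-early-occurrence claim, since one must rule out both occurrences aligned with the period $p$ and misaligned ones. I would combine two observations. Let $r$ be the longest border of the first closed prefix $u_{j-t+1}$ of the run; it is a prefix of $b'$, and being a non-internally-occurring border of a word of period $p$, it occurs inside $u_j$ only at positions congruent to $1$ modulo $p$. Consequently, any occurrence of $b'$ starting at some $c\in\{2,\dots,p+t\}$ begins with an occurrence of $r$ lying inside $u_j$, forcing $c\equiv 1\pmod p$; in particular $c\ge p+1$, so (as $c\le p+t\le j$) the occurrence of $b'$ reaches position $j+1$, and periodicity of $u_j$ then equates $w[j+1]$ with $w[j+1-p]$, contradicting the mismatch recorded in the first step. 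Hence no such occurrence exists. The only configuration escaping the first observation is when the run of ones starts at position $1$, where $r$ is empty; I would check that this forces $p=1$, so that every position is automatically aligned and the periodicity contradiction alone finishes the argument.
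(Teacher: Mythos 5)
Your argument is correct, but it follows a genuinely different route from the paper's. Both proofs pivot on Remark~\ref{rem:nth1}: the paper takes $u$ to be the prefix of length $n-1$, where $n$ counts the $1$s up to the end of the $1$-run, and compares the second occurrences of $u$ and of $ux$ (your $b'$ is precisely the paper's $ux$), splitting into cases according to whether those occurrences overlap and whether $c(|u|)$ lies in the same run as $c(r)$; the hardest case needs an auxiliary open prefix and a further count of $1$s. You instead organize everything around the common period $p$ of the closed prefixes in the run (via Lemmas~\ref{lem:nbo} and~\ref{lem:ce}), place the second occurrence of $b'$ at position $p+s+1$, and exclude earlier occurrences by an alignment argument: every occurrence of the seed border $r$ inside $u_j$ starts at a position $\equiv 1 \pmod{p}$, so an early occurrence of $b'$ would be period-aligned, cover position $j+1$, and force $w[j+1]=w[j+1-p]$, contradicting the openness of $u_{j+1}$. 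This buys a single uniform contradiction in place of the paper's three-way case analysis, at the cost of having to spell out the shift-by-$p$ argument behind the alignment claim (slide an occurrence of $r$ left by $p$ until it lands inside $u_{j-t+1}$, where $r$, as the longest border of a closed word, occurs only at positions $1$ and $p+1$). One small imprecision: the degenerate case $r=\varepsilon$ is not ``the run of ones starts at position $1$'' but rather $|u_{j-t+1}|=1$, i.e.\ $j=t$, so that $1^t0^s1$ is a prefix of $\oc(w)$ --- exactly the prefix case the paper also treats separately --- and there indeed $p=1$, so your periodicity contradiction still applies.
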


\begin{proof}
Let $w=w[1]w[2]w[3]\cdots$ with $w[i]\in\Sigma$, and let
$c=\oc(w)=c(1)c(2)\cdots$ with $c(i)\in\{0,1\}$ for all integers $i\geq 1$.
Let $a\in\Sigma$ be the letter such that $w[1]=a$.
The result is clear in the case when $1^t0^s1$ is a prefix of $\oc(w)$, for this implies that $w$ begins in $a^tb$, where $b$ is a letter in $\Sigma$ different from $a$. Since the longest border of $a^tb$ is the empty word, it follows that the next occurrence of $a^t$ must occur within the suffix $w[t+2]w[t+3]\cdots  $ of $w$, so that $c(t+1)\cdots c(2t)=0^t$ whence $t\leq s$.

We may now assume that $1^t0^s1$ occurs in $c$ at some later position. Fix a positive integer $r$ such that $1^t0^s1$ is a suffix of  $c(1)\cdots c(r+s+1)$. Let $n=|c(1)\cdots c(r)|_1$ and $u$ be the prefix of $w$ of length $n-1$. 
We note that since $1^t0^s1$ occurs in $c$ and not just as a prefix, we have $t<n$ and $n\geq 2$ (hence $u$ is nonempty). It follows that there exist distinct letters $x,y\in \Sigma$ such that $w$ begins in $ux$ and $w[1]\cdots w[r+1]$ terminates in $uy$. Hence, the second occurrence of $u$ in $w$ terminates in position $r$, while the second occurrence of $ux$ in $w$ terminates in position $r+s+1$. If the second occurrence of $ux$ in $w$ does not overlap the second occurrence of $u$ in $w$, then $s\geq |u|=n-1 \geq t$. If the second  occurrence of $ux$ in $w$ overlaps the second occurrence of $u$ in $w$ by an amount $s'\geq 1$, then we have that $s+s'=|u|=n-1$ and $u$ has a border of length $s'$.  
Let $v$ denote the longest border of $u$. Thus $|v|\geq s'$.
First suppose that either $c(|u|)=0$ or $c(|u|)=1$ but $c(|u|)$ and $c(r)$ do not belong to the same run.  Then, 
since $|c(1)\cdots c(|u|)|_1=|(u)|_{1}\geq s'+1$ by Remark~\ref{rem:nth1}, we deduce that
\[t\leq |c(|u|+1)\cdots c(r)|_{1}\leq n-(s'+1)=s,\] as required.

Finally, suppose $c(|u|)=1$ with $c(|u|)$ and $c(r)$ belonging to the same run.
In this case, $u$ and $ux$ are both closed, so that $vx$ is a prefix of $u$.
Therefore $|v|>s'$, since $w[1]\cdots w[s']y$ is a prefix of $u$ as well, and hence $v$ has a border of length $s'$. Now, let $px$ be the prefix of $w$ (and of $vx$) that terminates with the first occurrence of $w[1]\cdots w[s']x$; then $px$ is necessarily open, and $|\oc(p)|_{1}\geq s'+1$ by Remark~\ref{rem:nth1}.
It follows that if
$1^i$ is a suffix of $c(1)\cdots c(|u|)$, hence $i\leq |v|-s'$. Thus, $t\leq |v|-s' + |u|-|v|=|u|-s'=s$. 
\end{proof}

%%%%%%%%%%%%%%%%%%%%%%%%%%%%%%%%%%%%%%%%%%%%%%%%%%%%%%%%%%%%%%%%%%%%%%%%%%%%%%%%%%%%%%%%%%%
\subsection{Sturmian words}
%%%%%%%%%%%%%%%%%%%%%%%%%%%%%%%%%%%%%%%%%%%%%%%%%%%%%%%%%%%%%%%%%%%%%%%%%%%%%%%%%%%%%%%%%%%

  We let $\Sigma=\{a,b\}$ be a fixed binary alphabet from now on, unless otherwise specified.
  An element of $\Sigma^{\omega}$ is a \emph{Sturmian word} if it contains exactly $n+1$ distinct factors of length $n$, for every $n\ge 0$. 
 A famous example of Sturmian word is the Fibonacci word \[F=abaababaabaababaababa\cdots\] that is
 the limit, as $n\to \infty$, of the sequence of words $(f_n)$, called the sequence of \emph{finite Fibonacci words}, defined by $f_{-1}=b$, $f_0=a$ and, for every $n\geq 1$, $f_n=f_{n-1}f_{n-2}$.  

It is well known that if $w$ is a Sturmian word then at least one of $aw$ and $bw$ is also a Sturmian word. A Sturmian word $w$ is called  \emph{standard} (or \emph{characteristic}) if $aw$ and $bw$ are both Sturmian words. The Fibonacci word is an example of standard Sturmian word. In the next section, we will deal specifically with standard Sturmian words. Here, we focus on finite factors of Sturmian words, called \emph{finite Sturmian words}. Actually, finite Sturmian words are precisely the elements of $\Sigma^{*}$ verifying the following balance property:  for any $u,v\in \Fact(w)$ such that $|u|=|v|$ one has $||u|_{a}-|v|_{a}|\le 1$ (or, equivalently, $||u|_{b}-|v|_{b}|\le 1$). 

We let $\St$ denote the set of finite Sturmian words. The language $\St$ is factorial (i.e., if $w=uv\in \St$, then $u,v\in \St$) and extendible (i.e., for every $w\in \St$ there exist letters $x,y\in \Sigma$ such that $xwy\in \St$).

We recall the following definitions given in \cite{DelMi94}. 

\begin{definition}
A word  $w\in \Sigma^{*}$ is a  left special (resp.~right special) Sturmian word if $aw,bw\in \St$ (resp.~if $wa,wb\in \St$). A bispecial Sturmian word is a Sturmian word that is both left special and right special. Moreover, a bispecial Sturmian word is strictly bispecial if $awa,awb,bwa,$ and $bwb$ are all Sturmian words; otherwise it is non-strictly bispecial. 
\end{definition}

For example, the word $w=ab$ is a bispecial Sturmian word, since $aw$, $bw$, $wa$ and $wb$ are all Sturmian. This example also shows that a bispecial Sturmian word is not necessarily a bispecial factor of some Sturmian word (which must be a palindrome); in fact, bispecial factors of Sturmian words coincide with \emph{strictly} bispecial Sturmian words (see \cite{Fi14} for more details on bispecial Sturmian words).

\begin{remark}\label{rem:rsp}
It is known that if $w$ is a left special Sturmian word, then $w$ is a prefix of some standard Sturmian word, and the left special factors of $w$ are prefixes of $w$. Symmetrically, if $w$ is a right special Sturmian word, then the right special factors of $w$ are suffixes of $w$. 
\end{remark}

Regarding open and closed prefixes of Sturmian words, we prove the following result.

\begin{theorem}\label{theor:main}
 Every (finite or infinite) Sturmian word $w$ is uniquely determined, up to isomorphisms of the alphabet $\Sigma$, by its oc-sequence $\oc(w)$.
\end{theorem}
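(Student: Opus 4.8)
The plan is to reconstruct the word $w$ from $\oc(w)$ letter by letter, showing at each step that the oc-sequence forces the next letter uniquely (once we have fixed, by the allowed alphabet isomorphism, which letter comes first). The base case is immediate: the single-letter prefix is always closed, and by renaming we may assume $w[1]=a$. The inductive claim will be that if we have correctly reconstructed the prefix $u=w[1]\cdots w[n]$, then the value $c(n+1)$ of the oc-sequence together with the Sturmian (balance) hypothesis determines $w[n+1]$.

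\emph{First I would handle the closed case.} Suppose $c(n+1)=1$, i.e.\ the prefix $ux$ of length $n+1$ is closed. Here there is nothing to decide combinatorially: by Lemma~\ref{cor:nbo} at most one letter $x$ makes $ux$ closed, and since we already know $u$ (hence its longest border and its period), the letter $x$ is the unique one for which $ux$ has the same period as $u$, as made explicit in Lemma~\ref{lem:ce}. So the closed case is forced purely by the structure of $u$, with no appeal to Sturmicity beyond the fact that such a closed extension exists.

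\emph{The open case is where Sturmicity does the work.} Suppose $c(n+1)=0$, so $ux$ is open. A priori both letters could yield an open extension over a general alphabet; the point is to show that for a Sturmian word exactly one of $ua$, $ub$ can be an open Sturmian prefix, given that $u$ is already a Sturmian prefix of $w$. I would argue through the special-factor structure recalled in Remark~\ref{rem:rsp}: a prefix $u$ of a Sturmian word has at most one right special proper factor that is a suffix of $u$, and the branching behavior of $u$ (whether $u$ is right special, and if so how its longest repeated suffix extends) is rigid for Sturmian words, which have exactly $n+1$ factors of length $n$. The key dichotomy to exploit is that $ux$ is closed iff the longest border of $u$ extends by $x$ inside $u$ (Lemma~\ref{lem:ce}); so $ux$ is open precisely when $x$ differs from the letter following the longest repeated prefix/suffix of $u$. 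In the Sturmian setting the balance condition pins down which single extension stays Sturmian and open, ruling out the other choice either because it breaks balance or because it would be forced closed.

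\emph{The main obstacle} will be the open case: I must show that the constraint ``$ux$ is open and Sturmian'' is satisfied by a unique $x$, and this requires genuinely using that Sturmian words are the maximally-balanced binary words with minimal factor complexity, rather than an arbitrary property of $u$ alone. I expect to need a careful case analysis according to whether $u$ is right special and whether its longest border is itself special, leaning on Remark~\ref{rem:rsp} to locate special factors as suffixes/prefixes, and possibly on Lemma~\ref{lem:ts} to control the run structure of $\oc(w)$. For the infinite case the reconstruction is simply the limit of the finite reconstructions, since each prefix is determined; and the ``up to isomorphism'' clause is absorbed entirely into the single initial choice of $w[1]$.
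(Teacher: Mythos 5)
Your overall architecture --- inductive reconstruction of the prefixes, with the closed extension forced by the period of the current prefix and the open extension forced by Sturmianity --- is the same as the paper's, but the crux, which you yourself flag as ``the main obstacle'', is left unproven, and the one tool you invoke for it does not apply. What you need is: if $u$ is a \emph{right special} Sturmian word (the only case in which balance alone does not force the next letter, since otherwise exactly one of $ua,ub$ is Sturmian), then at least one of $ua$, $ub$ is closed; together with Lemma~\ref{cor:nbo} this gives exactly one closed and one open Sturmian extension, and the oc-value then selects the letter. Your ``key dichotomy'' ($ux$ closed iff $x$ continues the longest repeated prefix of $u$) is attributed to Lemma~\ref{lem:ce}, but that lemma is stated, and is only true in that form, for $u$ \emph{closed}. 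For open $u$ the dichotomy can fail outright: for $u=aab$ the longest repeated prefix is $a$, continued by $a$, yet $aaba$ is open (its border $a$ has an internal occurrence), and in fact both $aaba$ and $aabb$ are open. So ``one of the two extensions must be closed'' is not a consequence of Lemmas~\ref{lem:nbo}--\ref{lem:ce}; it is a specific property of right special Sturmian words that has to be established.

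The missing argument is precisely the paper's Lemma~\ref{lem:lsp} combined with Lemma~\ref{lem:speclo}: for a right special Sturmian word $u$, the longest repeated prefix $v$ of $u$ is also a suffix of $u$ (if $u$ is closed this is immediate from the definition; if $u$ is open, $v$ is right special in $u$ and hence a suffix of $u$ by Remark~\ref{rem:rsp}); then, letting $x$ be the letter following the occurrence of $v$ as a prefix of $u$, the word $ux$ has $vx$ as a border with no internal occurrence (an internal occurrence would contradict the maximality of $v$), so $ux$ is closed. You gesture at Remark~\ref{rem:rsp} and at ``locating special factors as suffixes'', so you are pointed at the right ingredients, but without carrying out this step the open case of your induction does not close: the sentence ``the balance condition pins down which single extension stays Sturmian and open'' is exactly the statement to be proved, not something that follows from balance alone. (A minor additional point: in the closed case the correct citation is the proof of Lemma~\ref{lem:nbo}, which shows the longest border of a closed $ux$ is $vx$ with $v$ the longest border of $u$, rather than Lemma~\ref{lem:ce}, which again presupposes $u$ closed.)
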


We need some intermediate lemmas.

\begin{lemma}\label{lem:lsp}
Let $w$ be a right special Sturmian word and let $u$ be its longest repeated prefix. Then $u$ is a suffix of $w$.
\end{lemma}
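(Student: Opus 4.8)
The plan is to argue by a dichotomy on whether the longest repeated prefix $u$ is itself right special in $w$. First I would dispose of the case $u=\epsilon$, in which the conclusion is immediate since the empty word is a suffix of every word. Assuming $u$ nonempty, I observe that because $u$ occurs at least twice it cannot equal $w$ (which occurs only once as a factor of itself), so $u$ is a proper prefix, $|u|<|w|$, and the letter $x=w[|u|+1]$ following the prefix occurrence of $u$ is well defined. In particular $ux$ is a prefix of $w$, hence $ux\in\Fact(w)$.

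If $u$ is right special in $w$, then the conclusion is immediate from Remark~\ref{rem:rsp}: since $w$ is a right special Sturmian word, every right special factor of $w$ is a suffix of $w$, so $u$ is a suffix of $w$. The substantive case is when $u$ is \emph{not} right special in $w$. Then $x$ is the unique letter with $ux\in\Fact(w)$, so every occurrence of $u$ in $w$ that is followed by a letter is in fact followed by $x$. As $u$ is repeated, it has an occurrence distinct from the prefix occurrence (distinctness being guaranteed by $u$ occurring at least twice together with $|u|<|w|$). Were this second occurrence internal, i.e.\ followed by a letter, that letter would be $x$, yielding a second occurrence of $ux$ and making $ux$ a repeated prefix strictly longer than $u$ --- contradicting the maximality of $u$ among repeated prefixes. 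Hence the second occurrence of $u$ must end at position $|w|$, that is, $u$ is a suffix of $w$.

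The engine of the argument is precisely the maximality of $u$: once all right extensions of $u$ inside $w$ are forced to coincide, any extra occurrence of $u$ other than at the very end would lengthen the repeated prefix, which is impossible. I expect the only points requiring care to be the bookkeeping that the prefix occurrence and the second occurrence of $u$ are genuinely distinct, and the correct application of Remark~\ref{rem:rsp} in the right special case; neither presents a real obstacle, so the proof should be short once the dichotomy is set up.
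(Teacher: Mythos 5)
Your proof is correct and follows essentially the same route as the paper: the paper splits on whether $w$ is closed or open, which by Remark~\ref{obs} (item~2) is exactly your dichotomy on whether the longest repeated prefix $u$ is right special, and the right special case is handled identically via Remark~\ref{rem:rsp}. The only difference is that in the other case you re-derive inline (via the uniqueness of the right extension $x$ and maximality of $u$) what the paper gets directly from the definition of a closed word, namely that $u$ has no internal occurrence and so must recur as a suffix.
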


\begin{proof}
If $w$ is closed, the claim follows from the definition of closed word. If $w$ is open, then $u$ is right special in $w$, and by
Remark~\ref{rem:rsp} $u$ is a suffix of $w$. 
\end{proof}

\begin{lemma}\label{lem:speclo}
 Let $w$ be a right special Sturmian word.  Then $wa$ or $wb$ is closed.
\end{lemma}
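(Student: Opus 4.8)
The plan is to split on whether $w$ is open or closed. If $w$ is closed, the statement is immediate: by Lemma~\ref{lem:ce} there is a letter $x\in\Sigma$ for which $wx$ has the same period as $w$, and that extension $wx$ is closed. Since $w$ is a right special Sturmian word, both $wa$ and $wb$ are Sturmian, so this closed extension is exactly one of the two words in question. (The degenerate case $w=\epsilon$ falls here as well, the empty word being closed.)

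The substantive case is when $w$ is open, so in particular $w\neq\epsilon$. Here I would let $u$ be the longest repeated prefix of $w$ and invoke Lemma~\ref{lem:lsp} to conclude that $u$ is a suffix of $w$. Let $c\in\Sigma$ be the letter such that $uc$ is a prefix of $w$; this exists because the longest repeated prefix of a nonempty word is proper, so $|u|<|w|$. My claim is that $wc$ is the closed extension, and I would prove it by exhibiting $uc$ as a border of $wc$ that has no internal occurrence.

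Concretely, $uc$ is a prefix of $wc$ by the choice of $c$, and it is a suffix of $wc$ because $u$ is a suffix of $w$; since $|uc|<|wc|$, it is a genuine border. To see it has no internal occurrence, I would use that $uc$, being longer than the longest repeated prefix $u$, occurs in $w$ exactly once, namely as a prefix. Appending $c$ to $w$ can create at most one new occurrence of $uc$, the one ending at the last position, i.e. the suffix occurrence. Hence the only occurrences of $uc$ in $wc$ are the prefix and the suffix, with none internal, and by Definition~\ref{def:closed} (equivalently Remark~\ref{obs}) the word $wc$ is closed.

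The delicate point is the occurrence-counting in the open case: one must check that appending $c$ spawns no internal copy of $uc$, which reduces precisely to the maximality of $u$ among the repeated prefixes of $w$. The edge case $u=\epsilon$ (where $c=w[1]$ and $uc$ is a single letter) is covered by the very same argument. As a sanity check, consistently with Lemma~\ref{cor:nbo}, I expect the complementary extension $w\bar c$ to be open, since $u$ remains its longest repeated prefix and is still right special there; this confirms that exactly one of $wa,wb$ is closed.
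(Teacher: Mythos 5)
Your proof is correct and follows essentially the same route as the paper's: both identify the longest repeated prefix $u$ of $w$, use Lemma~\ref{lem:lsp} to place $u$ as a suffix, and then show that extending $w$ by the letter $c$ following the prefix occurrence of $u$ produces the border $uc$, which cannot occur internally by maximality of $u$. The only cosmetic difference is your initial split into the cases $w$ open and $w$ closed (handling the latter via Lemma~\ref{lem:ce}); the paper treats both uniformly, since Lemma~\ref{lem:lsp} already covers the closed case.
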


\begin{proof}
Let $u$ be the longest repeated prefix of $w$ and $x$ be the letter following the occurrence of $u$ as a prefix of $w$. By Lemma \ref{lem:lsp}, $u$ is a suffix of $w$. Clearly, the longest repeated prefix of $wx$ is $ux$, which is also a suffix of $wx$ and cannot have internal occurrences in $wx$, otherwise the longest repeated prefix of $w$ would not be $u$. Therefore, $wx$ is closed.
\end{proof}

So, by Lemmas~\ref{cor:nbo} and~\ref{lem:speclo}, if $w$ is a right special Sturmian word, then one of $wa$ and $wb$ is closed and the other is open. This implies that the oc-sequence of a (finite or infinite) Sturmian word characterizes it up to exchange of letters. The proof of Theorem \ref{theor:main} is therefore complete.

%%BEGIN C* STUFF
We now prove that $St$ is maximal in the class of factorial languages over $\Sigma$ verifying the
condition of Theorem~\ref{theor:main}, i.e., such that their members are determined by their
$\oc$ sequences. Let us write $u\sim v$ when two words $u,v\in\Sigma^{*}$ are isomorphic, and let
\[\mathcal{C}=\{A\subseteq \Sigma^*\mid \forall u\in A, \Fact(u)\subseteq A
\,\wedge\, \forall u,v \in A: \oc(u)=\oc(v)\Rightarrow u\sim v\}.\]
We note that $\mathcal{C}$ is nonempty (e.g., $A=\{\varepsilon,0\}\in\mathcal{C}$), partially ordered with respect to inclusion, and such that every increasing chain 
\[A_1\subseteq A_2 \subseteq A_3\subseteq \cdots\]
with all $A_i\in \mathcal{C}$
has an upper bound in $\mathcal{C}$ given by  $\bigcup_{i\geq 1}A_i.$ Thus, by Zorn's lemma, $\mathcal{C}$ admits at least one maximal element. 

\begin{theorem}
\label{thm:C}
$St$ is a maximal element of $\mathcal{C}.$
\end{theorem}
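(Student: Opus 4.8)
The plan is to first record that $\St\in\mathcal C$: the language $\St$ is factorial, and by Theorem~\ref{theor:main} any two finite Sturmian words with equal oc-sequence are isomorphic. For maximality I argue by contradiction. Suppose $A\in\mathcal C$ with $\St\subsetneq A$, and choose $w\in A\setminus\St$ of minimal length. Since $A$ is factorial, every proper factor of $w$ lies in $A$; being shorter, each such factor is Sturmian by minimality of $w$. Thus $w$ is a \emph{minimal unbalanced word}: $w$ is not balanced, but all its proper factors are. Write $n=|w|$, $p=w[1]\cdots w[n-1]$, let $x=w[n]$, and denote by $\bar x$ the other letter.

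The candidate for a contradiction is the word $s=p\bar x$ obtained by complementing the last letter of $w$. First, $s$ is Sturmian: $p$ is a finite Sturmian word, hence a factor of some infinite Sturmian word and so admits at least one Sturmian right extension; since $px=w$ is unbalanced, that extension must be $p\bar x=s$, so $s\in\St\subseteq A$. Moreover $w\not\sim s$, because an alphabet isomorphism preserves balance while $w$ is unbalanced and $s$ is not. As $w$ and $s$ share the prefix $p$, their oc-sequences agree on the first $n-1$ positions, so the whole argument reduces to showing that $\oc(w)$ and $\oc(s)$ agree in position $n$ as well, i.e. that $w$ and $s$ have the same open/closed status. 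Once this is established, $w,s\in A$ are non-isomorphic with $\oc(w)=\oc(s)$, contradicting $A\in\mathcal C$ and forcing $A=\St$.

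It remains to analyse openness, and by Lemma~\ref{cor:nbo} at most one of $px,p\bar x$ is closed, so it suffices to prove that both are open. I first handle $w=px$. The imbalance of $w$ is witnessed by two factors of equal length $\ell<n$ whose $a$-counts differ by at least $2$; by minimality these may be taken to be a prefix $P=w[1]\cdots w[\ell]$ and a suffix $S=w[n-\ell+1]\cdots w[n]$ (otherwise the extreme counts would both be realised inside $w[1]\cdots w[n-1]$ or inside $w[2]\cdots w[n]$, making a proper factor unbalanced). Comparing $P$ with the length-$\ell$ suffix of $w[1]\cdots w[n-1]$, and $S$ with the length-$\ell$ prefix of $w[2]\cdots w[n]$, the balance of these proper factors forces $w[1]\neq w[n]$. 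Hence $w$ has no nonempty border and, as $n\ge 4$, $w$ is open.

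The main obstacle is the remaining claim: the Sturmian sibling $s=p\bar x$ is open. This is exactly where minimality of $w$ must be used — it is false for an arbitrary Sturmian $p$ with a unique Sturmian extension (for instance $p=aabaab$ has the closed extension $aabaaba$, but there $pb=aabaabb$ fails to be minimal unbalanced, its factor $abaabb$ being already unbalanced). I would prove it by locating the longest border $\beta$ of $s$ and exhibiting an internal occurrence of it (equivalently, showing that the longest repeated prefix of $s$ is right special in $s$, cf. Remarks~\ref{obs} and~\ref{rem:rsp}). Writing $\beta=\gamma\bar x$ with $\gamma$ the longest border of $p$, the assumption that $s$ is closed makes $\gamma\bar x$ a prefix of $p$ with no internal occurrence in $s$; I would then combine this with the prefix/suffix witness structure $P,S$ of the minimal unbalanced word $w=px$ and the balance of all proper factors of $w$ to produce an internal occurrence of $\beta$, contradicting closedness of $s$. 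This border bookkeeping, driven by minimality, is the delicate step; everything else is routine.
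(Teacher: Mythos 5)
Your overall strategy is the same as the paper's: take a minimal-length $w\in A\setminus\St$, complement its last letter to obtain a Sturmian word $s=p\bar x$, note $w\not\sim s$, and reduce (via Lemma~\ref{cor:nbo}) to showing that \emph{both} $w$ and $s$ are open, so that $\oc(w)=\oc(s)$ gives the contradiction. Your derivation of $s\in\St$ from extendibility of $\St$ is correct and clean. However, the proof is not complete: the decisive claim --- that the Sturmian word $s=p\bar x$ is open --- is precisely the one you leave as a plan (``I would prove it by\dots I would then combine\dots''), and the plan does not contain the idea that makes it work. In the paper, writing $w=aub$, minimality forces $u$ to be a bispecial Sturmian word which is \emph{not} strictly bispecial, and Lemma~\ref{thm:bispCF} then gives $bua=(bwa)^n$ for a strictly bispecial Sturmian word $w$ and $n>1$, hence $aua=awa(bwa)^{n-1}$. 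Since $awa$ has one more occurrence of $a$ than any other factor of its length, it occurs exactly once in $aua$; consequently every border of $aua$ is a proper suffix of $bwa$ and therefore recurs internally (inside $awa$), so $aua$ is open. Nothing in your sketch (the witnesses $P,S$ together with balance of proper factors of $w$) substitutes for this periodic decomposition of the non-strictly bispecial word $u$, and without it there is no visible route to an internal occurrence of the longest border of $s$. This is a genuine missing idea, not a routine bookkeeping step.

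There is also a smaller flaw in the part you do carry out. To show that $w$ itself is open you deduce $w[1]\neq w[n]$ and conclude ``hence $w$ has no nonempty border''. That implication is false in general: $abab$ has distinct first and last letters but has the border $ab$. The conclusion is true for a minimal unbalanced word, but it needs the paper's argument: a border $r$ of $w$ is a proper factor, hence balanced, hence shorter than the prefix witness $P$ and the suffix witness $S$; writing $P=r\alpha$ and $S=\beta r$ yields $|\alpha|=|\beta|$ and $|\alpha|_a-|\beta|_a=2$, and if $r\neq\varepsilon$ both $\alpha$ and $\beta$ are factors of the proper factor $w[2]\cdots w[n]$, contradicting its balance. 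You should replace the ``hence'' by this comparison.
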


Again we need to recall two lemmas. The first is a well-known result about balanced words 
(cf.~\cite[Proposition~2.1.3]{LothaireAlg}):
\begin{lemma}
\label{thm:unbal}
A word $s\in\Sigma^{*}$ is not balanced if and only if there exists a palindrome $v$ such that
$ava,bvb\in\Fact(s)$.
\end{lemma}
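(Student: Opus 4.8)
The statement is an equivalence, and since the reverse implication is immediate while the forward one carries all the weight, the plan is to dispatch the easy direction first and then invest in the forward one. For the ``if'' part, suppose $v$ is any word with $ava,bvb\in\Fact(s)$: then $ava$ and $bvb$ are two factors of the same length $|v|+2$ whose numbers of $a$'s are $|v|_a+2$ and $|v|_a$, so they differ by $2$ and $s$ is not balanced. I would stress here that palindromicity of $v$ is never used in this direction; it is a property that the forward direction will have to \emph{produce}, via a minimality argument.

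For the forward implication I would assume $s$ is not balanced, look at all pairs of equal-length factors whose $a$-counts differ by at least $2$, and fix one such pair $(u_1,u_2)$ with $|u_1|_a>|u_2|_a$ of minimal common length $n$. The first step is to pin down the shape of this minimal witness. Deleting the first letter of both words yields factors of length $n-1$ whose $a$-counts must differ by at most $1$ by minimality; comparing this with the original gap $\ge 2$ forces the gap to equal exactly $2$ and forces $u_1$ to begin with $a$ and $u_2$ with $b$. The symmetric argument on last letters then gives $u_1=az_1a$ and $u_2=bz_2b$ with $|z_1|=|z_2|=n-2$ and $|z_1|_a=|z_2|_a$. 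The heart of the argument is to upgrade this to a \emph{common palindromic} middle. For each $k$, let $p_k,p_k'$ be the length-$k$ prefixes of $z_1,z_2$ and $q_k,q_k'$ their length-$k$ suffixes; since $u_1,u_2\in\Fact(s)$, all of $ap_k,bp_k',q_ka,q_k'b$ and so on are factors of length $k+1<n$. Comparing $ap_k$ (which has $1+|p_k|_a$ occurrences of $a$) with $bp_k'$ (which has $|p_k'|_a$) and invoking minimality forces $|p_k|_a=|p_k'|_a$ for every $k$; taking successive differences gives $z_1=z_2=:v$. Comparing instead the length-$(k+1)$ prefix $av[1]\cdots v[k]$ against the length-$(k+1)$ suffix $v[n-1-k]\cdots v[n-2]\,b$, together with the mirror pair, forces the length-$k$ prefix and the length-$k$ suffix of $v$ to carry the same number of $a$'s for every $k$; successive differences of these equalities yield $v[k]=v[n-1-k]$ for all $k$, i.e. $v$ is a palindrome, and $u_1=ava$, $u_2=bvb$ are exactly the required factors.

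The step I expect to be the main obstacle is this final passage to a palindrome, because there is no direct way to compare $v$ with its reversal $\wt{v}$: the word $s$ is arbitrary, so $\wt{v}$ need not even be a factor of $s$. The device that circumvents this is to avoid reversing anything at all and instead to play short \emph{prefixes} of the minimal middle against short \emph{suffixes}, letting the minimality of $n$ do the symmetrizing for free. Concretely, all the bookkeeping collapses to tracking the single integer $|p_k|_a-|p_k'|_a$ and to the observation that, because the two middles have equal total $a$-count, the complementary suffix difference is exactly its negative; this is what lets a deviation at either end produce a forbidden gap of $2$ at some length below $n$. I would present the two comparisons (prefix-versus-prefix to get $z_1=z_2$, prefix-versus-suffix to get the palindrome) as the two halves of one counting lemma, so that the whole forward direction rests on the minimal-witness shape plus a single uniform length-$(k+1)$ comparison.
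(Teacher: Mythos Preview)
The paper does not prove this lemma; it simply quotes it as a well-known result, citing \cite[Proposition~2.1.3]{LothaireAlg}. Your argument is correct and is essentially the classical proof found there: take a minimal-length unbalanced pair, pin down its endpoints by deleting first and last letters, then use minimality on all shorter factors to force the two middles to coincide and to be palindromic.

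One small expositional gap worth tightening: comparing $ap_k$ with $bp_k'$ alone only yields the inequality $|p_k|_a\le |p_k'|_a$, not equality. The reverse inequality comes from the companion suffix comparison (pair $q_{\,n-2-k}\,a$ against $q_{\,n-2-k}'\,b$ and use $|z_1|_a=|z_2|_a$), exactly as you hint in your closing paragraph when you note that the complementary suffix difference is the negative of the prefix one. Once that is made explicit, the proof is complete as you outline it.
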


Next is an immediate consequence of 
known properties of Christoffel words (cf.~\cite{Fi14}).
\begin{lemma}
\label{thm:bispCF}
A word $u\in\Sigma^{*}$ is a non-strictly bispecial Sturmian word if and only if there exists
a strictly bispecial Sturmian word $w$ and an integer $n>1$ such that
\[\text{\emph{either} }\;aub=(awb)^{n}\in St\;
\text{ \emph{or} }\; bua=(bwa)^{n}\in St\,.\]
\end{lemma}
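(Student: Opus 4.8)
The plan is to prove the two implications separately, using the letter-exchange symmetry $a\leftrightarrow b$ to cut the work in half: since $w$ is a palindrome we have $\widetilde{awb}=bwa$, so swapping the two letters turns $aub=(awb)^{n}$ into $bua=(bwa)^{n}$ and preserves the property ``$u$ is non-strictly bispecial'' (up to the isomorphism $\sim$); hence it suffices to establish the equivalence for the first alternative and read off the second by symmetry. Throughout I rely on the known facts imported from \cite{Fi14}: the strictly bispecial Sturmian words are exactly the \emph{central} words, equivalently $w$ is strictly bispecial if and only if $awb$ (and its reversal $bwa$) is a Christoffel word, and Christoffel words are primitive. Recall also that by Remark~\ref{rem:rsp} left/right special Sturmian words are prefixes/suffixes of standard words.

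For the ``if'' direction, assume $aub=(awb)^{n}$ with $w$ central and $n>1$. Multiplying out the power yields $u=wb(awb)^{n-2}aw$, so that $au=(awb)^{n-1}aw$ and $ub=wb(awb)^{n-1}$ are respectively a prefix and a suffix of the balanced word $(awb)^{n}$; hence $au,ub\in St$. The non-strictness is then immediate: since $bua=bwb(awb)^{n-2}awa$ has $bwb$ as a prefix and $awa$ as a suffix, Lemma~\ref{thm:unbal} (taken with palindrome $v=w$) shows that $bua$ is not balanced, so $bua\notin St$ while $aub=(awb)^{n}\in St$. It remains to verify $bu,ua\in St$, which I would handle again through Lemma~\ref{thm:unbal}: were $bu$ unbalanced, there would be a palindrome $z$ with $aza,bzb\in\Fact(bu)$; as $u\in St$ the copy of $bzb$ must use the prepended letter, forcing $z$ to be a palindromic prefix of $u$, i.e.\ a central factor of $w$, and the occurrence structure of such central words inside the Christoffel power then excludes this. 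Since $n>1$ forces $u\neq w$, the word $u$ is genuinely non-strictly bispecial.

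For the ``only if'' direction, let $u$ be non-strictly bispecial, so $au,bu,ua,ub\in St$ while some two-sided extension fails. A short application of Lemma~\ref{thm:unbal} shows that $aua$ and $bub$ are automatically Sturmian whenever $au,ua$ (resp.\ $bu,ub$) are: any palindrome $v$ witnessing non-balance of $aua$ would have $bvb$ confined to the interior factor $u$ and $ava$ then lying inside $au$ or $ua$, contradicting their balance. Hence the failing extension is one of the mixed words $aub,bua$, and up to exchanging letters I may assume $bua\notin St$. Choosing a witnessing palindrome $v$ of maximal length with $ava,bvb\in\Fact(bua)$, the balance of $bu$ and $ua$ forces $bvb$ to begin with the prepended $b$ and $ava$ to end with the appended $a$; thus $vb$ is a prefix of $u$ and $av$ is a suffix of $u$. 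Maximality of $v$ together with the balance of $u$ identifies $v$ with a central word $w$, so $u$ begins in $wb$, ends in $aw$, and consequently $aub$ both begins and ends with the Christoffel word $awb$.

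Finally, the two located occurrences of $w$, together with the detected pair $awa,bwb$, pin the period of $aub$ at $|awb|=|w|+2$; since $awb$ is primitive and $aub$ is balanced, a balanced word of period $|awb|$ beginning and ending with $awb$ must be a power of it, giving $aub=(awb)^{n}$ with $n>1$ (because $u\neq w$). I expect the main obstacle to be precisely this last cluster of claims: that the maximal palindromic witness $v$ is central, that $bu$ and $ua$ stay balanced in the ``if'' direction, and above all that the detected period forces an \emph{exact} power rather than leaving a fractional remainder. All three rest on the fine occurrence and periodicity structure of a central word inside powers of its Christoffel word (the Fine--Wilf periodicity of central words and the primitivity of Christoffel words), which is exactly the known theory of \cite{Fi14} that makes the statement an ``immediate consequence''.
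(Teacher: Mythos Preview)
The paper does not prove this lemma at all; it simply records it as ``an immediate consequence of known properties of Christoffel words'' and cites~\cite{Fi14}, where the full classification of (strictly and non-strictly) bispecial Sturmian words via Christoffel words is carried out. So there is no proof in the paper to compare against; what one can assess is whether your sketch actually reconstructs that classification.

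Your outline has the right shape, and your reduction $aua,bub\in St$ whenever $au,ua,bu,ub\in St$ is correct. However, the parts you flag as ``obstacles'' are genuine gaps, and one of them is more serious than you indicate. In the ``only if'' direction you locate a palindrome $w$ with $wb\in\Pref(u)$ and $aw\in\Suff(u)$, so that $awb$ is both a prefix and a suffix of $aub$; you then assert that this ``pins the period of $aub$ at $|awb|$''. It does not: a common prefix--suffix $awb$ yields period $|aub|-|awb|$, not $|awb|$. Without the small period you cannot conclude that $aub$ is an \emph{integer} power of $awb$, only that it begins and ends with it, and there are balanced words with that property which are not powers. Obtaining the period $|awb|$ is precisely the substantive combinatorial step, and it requires controlling all occurrences of $w$ inside $u$ (equivalently, showing that the \emph{minimal} unbalance witness for $bua$ already forces a full $awb$-periodic structure). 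That is the content supplied by~\cite{Fi14}; your maximal-witness heuristic does not reach it.

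The ``if'' direction gap is smaller but also real: showing $bu,ua\in St$ from $aub=(awb)^n$ is not just a Lemma~\ref{thm:unbal} bookkeeping exercise, since $bu=bwb(awb)^{n-2}aw$ visibly contains $bwb$, so one must argue that $awa$ does \emph{not} occur in it, which again depends on the internal occurrence pattern of $w$ inside $(awb)^\omega$. In short, your plan is sound and you have correctly isolated the hard steps, but the present argument does not discharge them; the lemma really rests on the structure theory of Christoffel words developed in~\cite{Fi14}, which is why the paper simply cites it.
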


\begin{proof}[Proof of Theorem~\ref{thm:C}]
It follows from Theorem~\ref{theor:main} that  $St \in \mathcal{C}.$ To see that $St$ is a maximal element of $\mathcal{C}$  we show that no element of $\mathcal{C}$ properly contains $St$. Suppose to the contrary that there exists an element $A\in \mathcal{C}$  such that $St\subsetneq  A$.
Let $s$ be an element of minimal length of $A$ not belonging to $St$.
By Lemma~\ref{thm:unbal}, there exists a word $v$ such that $ava,bvb\in\Fact(s)$. Since
all proper factors of $s$ are balanced, without loss of generality we can assume that $ava$ is
a prefix of $s$ and $bvb$ is a suffix.
Hence we can write $s=aub$ for some $u\in \Sigma^+$.

Let $r$ be a border of $s$. Since $r$ is balanced, we have $|r|<|ava|=|bvb|$. Writing $ava=r\alpha$ and $bvb=\beta r$, it follows that $|\alpha|=|\beta|$ and
$|\alpha|_{a}-|\beta|_{a}=2$, whence $r=\varepsilon$ by our minimality assumption on $s$.
Therefore $s$ is open, so that $\oc(s)$ terminates in $0$. We will show that $aua\in St$ and $\oc(aua)$ terminates in $0$. It follows then that $aua, s\in A$ and that $\oc(aua)=\oc(s)$, a contradiction since $aua\not\sim s$.

By definition of $\mathcal{C}$ it follows that $au, ub\in A$.
By minimality of the length of $s$ we have $au, ub \in St$. Thus $aua$ and $bub\in St$, so that $ua, ub, au, bu \in St$; in other words, $u$ is a bispecial Sturmian word. On the other hand, as $s=aub\notin St$, we have that $u$ is non-strictly bispecial. Thus, by
Lemma~\ref{thm:bispCF}, there exists a word $w$ such that $bua=(bwa)^n$ for some $n>1$. 
Hence $aua=awa(bwa)^{n-1}$. 
Clearly, $awa$ occurs only once in $aua$, as all other factors of the same length have one less occurrence of the letter $a$. Thus, if $z$ is a border of $aua$, then $|z|<|awa|$.  It follows that $z$ is a proper suffix of $bwa$ and so it has an internal occurrence in $aua$ (as a proper suffix of  $awa$). Therefore $aua$ is open, so that $\oc(aua)$ terminates in $0$, as required.
\end{proof}

%%%%%%%%%%%%%%%%%%%%%%%%%%%%%%%%%%%%%%%%%%%%%%%%%%%%%%%%%%%%%%%%%%%%%%%%%%%%%%%%%%%%%%%%%%%
\subsection{Standard Sturmian words}
%%%%%%%%%%%%%%%%%%%%%%%%%%%%%%%%%%%%%%%%%%%%%%%%%%%%%%%%%%%%%%%%%%%%%%%%%%%%%%%%%%%%%%%%%%%

%FORMULA
%Let $r_n=\wt{s_n}$ for all $n\geq -1$, so that $r_{-1}=b$, $r_0=a$, and
%$r_{n+1}=r_{n-1}r_n^{d_n}$ for $n\geq 0$.
%\[\begin{split}
% w&=\prod_{n=0}^\infty r_n^{d_n}=
%    \prod_{n=0}^\infty r_n r_{n-1}^{-1}r_{n-1}r_n^{d_n}r_n^{-1}=
%    \prod_{n=0}^\infty r_n r_{n-1}^{-1}r_{n+1}r_n^{-1}\\
%  &=r_0 r_{-1}^{-1}\prod_{n=0}^\infty \left(r_{n+1}r_n^{-1}\right)^2
%   =ab^{-1}\prod_{n=0}^\infty \left(r_{n+1}r_n^{-1}\right)^2\;.
%  \end{split}\]

In this section, we deal with the oc-sequence of  standard Sturmian words. In \cite{BuDelFi13} a characterization of the oc-sequence of the Fibonacci word $F$ was given.

Let us begin by recalling some definitions and basic results about standard Sturmian words. For more details, the reader can see \cite{Be07} or \cite{LothaireAlg}.

Let $\alpha$ be an irrational number such that $0<\alpha<1$, and let $\left[0;d_{0}+1,d_{1},\ldots\right]$ be the continued fraction expansion of $\alpha$.
The sequence of words defined by $s_{-1}=b$, $s_{0}=a$ and $s_{n+1}=s_{n}^{d_{n}}s_{n-1}$ for $n\ge 0$, converges to the infinite word $w_{\alpha}$, called the \emph{standard Sturmian word of slope $\alpha$}. % this is because $\alpha=\lim_{n\to \infty}h(n)/n$, where $h(n)$ is the number of $b$'s in the prefix of length $n$ of $w$. 
The sequence of words $s_{n}$ is called the \emph{standard sequence} of $w_{\alpha}$. 

Note that $w_{\alpha}$ starts with letter $b$ if and only if $\alpha>1/2$, i.e., if and only if $d_{0}=0$. In this case, $\left[0;d_{1}+1,d_{2},\ldots\right]$ is the continued fraction expansion of $1-\alpha$, and $w_{1-\alpha}$ is the word obtained from $w_{\alpha}$ by exchanging $a$'s and $b$'s. Hence, without loss of generality, we will suppose in the rest of the paper that $w$ starts with letter $a$, i.e., that $d_{0}>0$. 

For every $n\ge -1$, one has 
\begin{equation}\label{eq:su}
s_{n}=u_{n}xy, 
\end{equation}
for $x,y$ letters such that $xy=ab$ if $n$ is odd or $ba$ if $n$ is even. Indeed, the sequence $(u_{n})_{n\ge -1}$ can be defined by: $u_{-1}=a^{-1}$, $u_0=b^{-1}$, and, for every $n\ge 1$, 
\begin{equation}\label{eq:un+1}
u_{n+1}=(u_{n}xy)^{d_{n}}u_{n-1}\,,
\end{equation}
where $x,y$ are as in \eqref{eq:su}.

\begin{example}
 The Fibonacci word $F$ is the standard Sturmian word of slope $1/\varphi^2=(3-\sqrt{5})/2$, whose continued fraction expansion is $[0;2,1,1,1,\ldots]$, so that $d_{n}=1$ for every $n\ge 0$. Therefore, the standard sequence of the Fibonacci word $F$ is the sequence $(f_n)$ defined by: $f_{-1}=b$, $f_{0}=a$, $f_{n+1}=f_{n}f_{n-1}$ for $n\ge 0$. This sequence is the sequence of finite Fibonacci words.
\end{example}

\begin{definition}
 A \emph{standard word} is a finite word belonging to some standard sequence. A \emph{central word} is a word $u\in\Sigma^*$ such that $uxy$ is a standard word, for letters $x,y\in \Sigma$. 
\end{definition}

It is known that every central word is a palindrome%
%(we assume here that $w^{-1}$ is a palindrome if $w$ is a palindrome)
. Actually, central words play a central role in the combinatorics of Sturmian words and have several combinatorial characterizations (see \cite{Be07} for a survey). We summarize some of these properties in the following proposition.
\begin{proposition}
\label{prop:PER}
Let $v$ be a word over $\Sigma$. The following are equivalent:
\begin{enumerate}
    \item $v$ is a central word;
    \item $v$ is a palindromic bispecial Sturmian word;
    \item $v$ is a power of a single letter or it can be written as
    \[v=pxyq=qyxp\] for some words $p$ and $q$ and distinct letters $x,y$.
\end{enumerate}
Moreover, in this latter case, $p$ and $q$ are central words themselves, and $v$ is a complete return to the longest between $p$ and $q$. In particular, central words are closed.
\end{proposition}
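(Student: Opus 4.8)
The plan is to prove the cyclic chain of implications $(1)\Rightarrow(2)\Rightarrow(3)\Rightarrow(1)$, together with the final "moreover" claims. For the equivalence of $(1)$ and $(2)$ I would lean on the definitions already in place: a central word $v$ satisfies $vxy\in\St$ for some standard word, hence $v$ is a Sturmian word, and since standard words are built so that their prefix $v$ is the longest proper palindromic prefix, $v$ is palindromic. To see that a central word is bispecial, I would use that $vab$ and $vba$ are both standard-generated factors (so $v$ is right special), and by reversal, since $v=\wt v$, right-special implies left-special, giving bispeciality; conversely, a palindromic bispecial Sturmian word is exactly a strictly bispecial Sturmian word up to the palindrome condition, and by the characterization recalled before Remark~\ref{rem:rsp} such a word is a prefix of a standard word and thus central. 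The main structural work is in the implication $(2)\Rightarrow(3)$, or equivalently a direct $(1)\Rightarrow(3)$ argument.

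For $(2)\Rightarrow(3)$, I would argue as follows. If $v$ is a palindromic bispecial Sturmian word that is not a power of a single letter, then $v$ contains both letters, so it is nonempty and has a nontrivial period structure. Because $v$ is left special, both $av$ and $bv$ are Sturmian, so $v$ has two distinct left extensions; by the palindrome property $v=\wt v$, the two distinct right extensions $va$ and $vb$ force $v$ to end in the two possible ways after reversal. The key is to invoke the periodicity structure of central words: a central word that is not a letter power has two coprime periods $p+1$ and $q+1$ (where $|v|=p+q+\gcd$-type relations from the Fine--Wilf theorem regime), and the longest proper palindromic prefix $q$ and suffix of $v$ interact so that $v$ can be decomposed as $pxyq=qyxp$. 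Concretely, writing $p$ and $q$ for the two "shorter" central words whose existence comes from the recursive construction \eqref{eq:un+1}, the standard word $vxy$ arises as a product $s_n=s_{n-1}^{\,k}s_{n-2}$, and peeling off the last two letters yields exactly the two factorizations $v=pxyq=qyxp$ with $x\neq y$. I expect this to be the main obstacle: making the decomposition $v=pxyq=qyxp$ drop out cleanly rather than merely asserting it from known central-word theory, and correctly tracking which of $x,y$ is which parity.

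For $(3)\Rightarrow(1)$ I would show that a word of the form $v=pxyq=qyxp$ (with $p,q$ central and $x\neq y$) is itself central, which follows by reversing the recursive construction: if $p$ and $q$ are consecutive central words in a standard sequence, then $pxyq$ is the next central word, so $v$ is central. Finally, for the "moreover" part, the decomposition $v=pxyq=qyxp$ directly exhibits $v$ as having $q$ (the longer of $p,q$) both as a proper prefix and as a proper suffix. Assuming without loss of generality $|q|\ge|p|$, I would check that $q$ has no internal occurrence in $v$: any internal occurrence of $q$ would create a third balanced extension contradicting the two-periods structure, or would violate balance via Lemma~\ref{thm:unbal}. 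This makes $v$ a complete return to $q$, hence closed by Definition~\ref{def:closed}; that $p,q$ are themselves central follows from the same recursive characterization applied one step down. The delicate point here is verifying the no-internal-occurrence condition for $q$, which I would handle by appealing to the balance property of Sturmian words together with the palindromicity of $v$.
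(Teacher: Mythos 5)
The paper does not actually prove Proposition~\ref{prop:PER}: it is presented explicitly as a summary of known characterizations of central words, with a pointer to the survey \cite{Be07} (and the classical sources of de Luca and Carpi--de Luca behind it). So there is no in-paper argument to compare yours against, and your attempt has to stand on its own. As written, it does not: at each of the two substantive points it defers to the very facts the proposition is asserting. For $(2)\Rightarrow(3)$ you invoke ``known central-word theory'' and the two-periods/Fine--Wilf structure to make the decomposition $v=pxyq=qyxp$ ``drop out,'' and you candidly flag this as the main obstacle --- but that decomposition \emph{is} the content of item~(3), so nothing has been proved. Similarly, the palindromicity of central words in $(1)\Rightarrow(2)$ is justified by saying that standard words are built so that $v$ is the longest proper palindromic prefix, which presupposes the palindromicity you are trying to establish (in the paper's setup a central word is just $v$ with $vxy$ standard; that such $v$ is a palindrome is a theorem, usually obtained via the identity in item~(3) or via iterated palindromic closure).

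There is also a logical slip in $(3)\Rightarrow(1)$: item~(3) only assumes $v=pxyq=qyxp$ for \emph{some words} $p,q$ and distinct letters $x,y$, whereas you prove the implication ``with $p,q$ central.'' The centrality (and palindromicity) of $p$ and $q$ is a \emph{conclusion} of the proposition (the ``moreover'' clause), not a hypothesis of~(3); assuming it collapses the equivalence. A correct route here is to show directly that the identity $pxyq=qyxp$ with $x\neq y$ forces $|v|+2=(|p|+2)+(|q|+2)-2$ with $\gcd(|p|+2,|q|+2)=1$, that $p$ and $q$ are palindromes, and then induct on $|v|$; the complete-return claim for the longer of $p,q$ then follows from the period structure rather than from a balance argument. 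Your appeal to Lemma~\ref{thm:unbal} for the no-internal-occurrence step is not obviously usable, since an internal occurrence of $q$ does not by itself produce the $ava$/$bvb$ configuration that lemma requires. In short: the skeleton $(1)\Rightarrow(2)\Rightarrow(3)\Rightarrow(1)$ is reasonable, but the load-bearing steps are asserted rather than proved, which is presumably why the paper simply cites them.
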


In fact, all Sturmian palindromes (and more generally, all \emph{rich} palindromes~\cite{GlJuWiZa09}) are closed; however, in general there do exist open palindromes, such as $aabbabaaababbaa$ (cf.~\cite[Remark 4.13]{BuDelFi13}).
\begin{remark}
Let $(s_{n})_{n\ge -1}$ be a standard sequence. It follows by the definition that for every $k\ge 0$ and $n\ge -1$, the word $s_{n+1}^{k}s_{n}$ is a standard word. In particular, for every $n\ge -1$, the word $s_{n+1}s_n=u_{n+1}yxu_nxy$ is a standard word. Therefore, for every $n\ge -1$, we have that 
\begin{equation}\label{eq:u}
u_nxyu_{n+1}=u_{n+1}yxu_n
\end{equation} is a central word.
\end{remark}

The following lemma is a well-known result (cf.~\cite{fisch}).

\begin{lemma}\label{lem:pref}
Let $w$ be a standard Sturmian word and let $(s_{n})_{n\ge -1}$ be its standard sequence. Then:
\begin{enumerate}
 \item
A standard word $v$ is a prefix of $w$ if and only if $v=s_{n}^{k}s_{n-1}$, for some $n\ge 0$ and $k\le d_{n}$.
 \item  A central word $u$ is a prefix of $w$ if and only if $u=(u_{n}xy)^{k}u_{n-1}$, for some $n\ge 0$, $0<k\le d_{n}$, 
 and distinct letters $x,y\in \Sigma$ such that $xy=ab$ if $n$ is odd or $ba$ if $n$ is even.
\end{enumerate}
\end{lemma}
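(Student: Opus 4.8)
The plan is to prove the two ``if'' directions by direct computation from the recurrences, to prove the converse of statement (2) by induction on the level $n$ (this is the crux), and then to deduce the converse of (1) from (2). Throughout I use that $d_i\ge 1$ for all $i\ge 0$ (recall $d_0>0$), so that for $n\ge 1$ the relation $s_n=s_{n-1}^{d_{n-1}}s_{n-2}$ exhibits $s_{n-1}$ as a prefix of $s_n$. For the ``if'' direction of (1), the word $s_n^k s_{n-1}$ is standard because for $k\ge 1$ it is the $(n{+}1)$-st term of the standard sequence directed by $(d_0,\dots,d_{n-1},k)$ (and it is $s_{n-1}$ when $k=0$); it is a prefix of $w$ because, $s_{n-1}$ being a prefix of $s_n$, it is a prefix of $s_n^{k+1}$, hence of $s_n^{d_n}s_{n-1}=s_{n+1}$ when $k\le d_n$, and $s_{n+1}$ is a prefix of $w$. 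The parallel computation, using \eqref{eq:su} to write $(u_n xy)^k u_{n-1}=s_n^k u_{n-1}$, gives the ``if'' direction of (2): this word is the corresponding $u_{n+1}$ for the directive sequence $(d_0,\dots,d_{n-1},k)$, hence central, and it is a prefix of the prefix $s_n^k s_{n-1}$ of $w$.

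The heart of the argument is the converse of (2): every central prefix of $w$ is of the form $(u_n xy)^k u_{n-1}$ with $0<k\le d_n$. I would prove this by induction on $n$, showing that the central (equivalently, by Proposition~\ref{prop:PER}, palindromic bispecial) prefixes of $w$ longer than $u_n$ and no longer than $u_{n+1}$ are exactly the $(u_n xy)^k u_{n-1}$, $0<k\le d_n$. Granting the hypothesis up to level $n$, I would read $w$ through the self-similar block structure coming from $s_{n+1}=s_n^{d_n}s_{n-1}$, and use the central identity \eqref{eq:u}, $u_n xy\,u_{n+1}=u_{n+1}yx\,u_n$, to show that scanning one further factor $s_n$ extends the current central prefix $(u_n xy)^{k-1}u_{n-1}$ to $(u_n xy)^{k}u_{n-1}$, while ruling out every intermediate length. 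I expect this last alignment claim — showing that a prefix of ``off-grid'' length is never a palindrome followed by a change of letter — to be the main obstacle: it is where the period structure forced by the fact that central words are closed (Proposition~\ref{prop:PER}) must be combined with the palindromicity in \eqref{eq:u}.

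Finally I would deduce the converse of (1). A standard-word prefix $v$ of length $\ge 3$ factors as $v=cxy$ with $c$ a central word and $xy\in\{ab,ba\}$ (Proposition~\ref{prop:PER}), so $c$ is a central prefix and, by the converse of (2), $c=(u_n xy)^k u_{n-1}$ for some $n\ge 0$, $0<k\le d_n$. Since $v$ is a prefix, its last two letters are the two letters following $c$ in $w$, and $v$ is standard exactly when these differ. For $n\ge 1$ they are the terminal pair of $s_{n-1}$ (which differ), so $v=s_n^k s_{n-1}$ as claimed; for $n=0$ one has $c=a^{k-1}$ followed by the run $a^{d_0}$, so the two letters differ only when $k=d_0$, giving $v=a^{d_0}b=s_1$, whereas $k<d_0$ yields no standard prefix. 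This last point also records the one degeneracy to be flagged: because $s_{-1}=b$ is not a prefix of $s_0=a$, level $0$ of (1) contributes only $s_0=a$ and $s_1$, while the level-$0$ central words $a^{k-1}=(u_0xy)^k u_{-1}$ (from the initial run $a^{d_0}$) are genuine prefixes with no standard-word counterpart — which is precisely why (2), unlike (1), uses $n=0$ nondegenerately and requires $k>0$. Together with the base cases $v=a$ and $v\in\{ab,ba\}$, handled directly, this completes both statements.
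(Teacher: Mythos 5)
The paper offers no proof of this lemma at all: it is introduced as ``a well-known result (cf.~\cite{fisch})'' and used as a black box, so there is nothing in the paper to compare your argument against line by line. The standard proofs in the literature run through palindromic prefixes and iterated palindromic closure: one shows that the central prefixes of $w$ are exactly its palindromic prefixes and that each is obtained from the previous one by palindromic closure, which produces the list $(u_nxy)^ku_{n-1}$ directly. Measured against that, your proposal has a genuine gap at exactly the point you yourself flag. The ``if'' directions and the reduction of (1) to (2) are essentially fine (and you are right to flag the $n=0$ degeneracy --- indeed, read literally, the ``if'' direction of (1) fails for $n=0$, $k<d_0$, since $a^kb$ is then a standard word of the prescribed form that is not a prefix of $w$). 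But the entire content of the lemma is the ``only if'' direction of (2): that $w$ has \emph{no} central prefixes other than those listed. Your proposal reduces this to an ``alignment claim'' ruling out central prefixes of every intermediate length, and then explicitly defers that claim as ``the main obstacle'' without proving it. A proof that announces its crux and stops there is not a proof; as written, the argument establishes only that the listed words are central prefixes, which is the easy half.

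The missing step can in fact be closed with the tools already quoted, and doing so would make the induction on $n$ unnecessary. By Proposition~\ref{prop:PER}, a nonempty central prefix $c$ of $w$ is a complete return to the longer word $q$ in its decomposition $c=pxyq=qyxp$ (or to $x^{j-1}$ if $c=x^j$); moreover $q$ is the \emph{longest} proper central prefix of $c$, since any longer central proper prefix would be a palindrome beginning, hence also ending, with $q$, forcing a forbidden internal occurrence of $q$ in $c$. Since the complete return to a given prefix $q$ occurring at position $1$ of $w$ is unique (it ends at the second occurrence of $q$), each central prefix of $w$ admits at most one immediate successor in the length-ordered chain of central prefixes. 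The words $(u_nxy)^ku_{n-1}$, which you have shown to be central prefixes, already form such an unbroken chain starting from $\varepsilon$; uniqueness of successors then forces every central prefix of $w$ to belong to it. Without this (or an equivalent) argument, the alignment claim --- and with it the converse of (2) and hence your derivation of the converse of (1) --- remains unestablished.
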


Note that $(u_{n}xy)^{d_{n}+1}u_{n-1}$ is a central prefix of $w$, but 
this does not contradict the previous lemma since, by~\eqref{eq:un+1},
$(u_nxy)^{d_n+1}u_{n-1}=u_{n+1}yxu_n$.

Recall that a \emph{semicentral word} (see~\cite{BuDelFi13}) is a word in which the longest repeated prefix, the longest repeated suffix, the longest left special factor and the longest right special factor all coincide.
The following proposition summarizes some properties of semicentral words proved in~\cite{BuDelFi13}.
\begin{proposition}
\label{prop:qxyq}
A word $v$ is semicentral if and only if $v=uxyu$ for a central word $u$ and distinct letters $x,y\in \Sigma$. Moreover, $u$ has exactly one internal occurrence in $v=uxyu$, and this occurrence is preceded by $x$ and followed by $y$. In particular, semicentral words are open (whereas central words are closed).
\end{proposition}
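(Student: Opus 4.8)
The plan is to prove Proposition~\ref{prop:qxyq} by establishing both directions of the equivalence and then the structural claims, relying heavily on the characterizations of central words in Proposition~\ref{prop:PER} and on Remark~\ref{obs}. For the ``if'' direction, I would start with a central word $u$ and distinct letters $x,y$, set $v=uxyu$, and verify directly that the four distinguished factors (longest repeated prefix, longest repeated suffix, longest left special factor, longest right special factor) all equal $u$. The key observation is that $u$ occurs in $v$ as both a prefix and a suffix, so it is certainly a repeated prefix and repeated suffix; to see it is the \emph{longest} such, I would argue that any longer repeated prefix would force $uxy$ (or a longer extension) to recur, contradicting the fact that $u$ is bispecial in the Sturmian sense and $x\neq y$. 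For the special-factor claims I would invoke Remark~\ref{rem:rsp}: since $u$ is a central word, hence bispecial Sturmian, it is the longest left special and longest right special factor of the relevant Sturmian extensions, and this property is inherited by $v=uxyu$ because $v$ contains no factor longer than $u$ that is special.

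For the ``only if'' direction, I would assume $v$ is semicentral, so its longest repeated prefix $u$ equals its longest right special factor and its longest left special factor. Being a repeated prefix that is also right special (and a repeated suffix that is also left special) makes $u$ a bispecial factor, and I would then show $u$ is in fact a central word: a bispecial factor of a Sturmian-type word that is simultaneously the longest left special and longest right special factor is palindromic and bispecial, hence central by Proposition~\ref{prop:PER}(2). Once $u$ is known to be central and to occur as both a prefix and a suffix of $v$ with the special structure forcing distinct continuations $x$ (after the prefix occurrence) and $y$ (before the suffix occurrence), writing out the overlap structure should yield $v=uxyu$ with $x\neq y$.

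For the ``moreover'' claims, I would analyze the internal occurrences of $u$ inside $v=uxyu$. The heart of the matter is to show there is exactly one internal occurrence and to locate it. Here I would use the palindromicity of central words from Proposition~\ref{prop:PER}: since $u=\widetilde u$, the representation $u=pxyq=qyxp$ from Proposition~\ref{prop:PER}(3) governs how $u$ overlaps itself. I would argue that an internal occurrence of $u$ in $uxyu$ corresponds to a border-type alignment, and the balance/special structure forces a unique such alignment, namely the one centered on the $xy$ junction, so that the internal copy of $u$ is immediately preceded by $x$ and followed by $y$. Finally, to conclude that $v$ is open, I would apply Remark~\ref{obs}: the longest repeated prefix $u$ of $v$ \emph{does} have an internal occurrence (the one just located), which is precisely the negation of the closedness condition.

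The main obstacle I anticipate is the uniqueness and placement of the internal occurrence of $u$ in $uxyu$. Showing that $u$ recurs internally at all is easy from the self-overlapping palindromic structure, but ruling out additional internal occurrences and pinning down that the single one is flanked exactly by $x$ on the left and $y$ on the right requires careful bookkeeping with the $u=pxyq=qyxp$ decomposition and the fact that $x\neq y$. I would expect to lean on the recursive central-word structure (that $p$ and $q$ are themselves central, per the last part of Proposition~\ref{prop:PER}) to control the overlaps, and this inductive/overlap analysis is where the real work lies; the remaining claims (the ``if'' direction and openness) then follow comparatively smoothly.
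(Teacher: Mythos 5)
First, note that the paper does not prove Proposition~\ref{prop:qxyq} at all: it is imported from~\cite{BuDelFi13} (``The following proposition summarizes some properties of semicentral words proved in~\cite{BuDelFi13}''), so there is no in-paper argument to compare yours against; your attempt has to stand on its own. As it stands it is a plan rather than a proof, and the one claim on which everything else hinges is exactly the one you defer. All of the assertions in your ``if'' direction --- that $u$ is the \emph{longest} repeated prefix and suffix of $uxyu$, that no factor longer than $u$ is special, and that $v$ is open --- reduce to knowing that $u$ occurs in $uxyu$ exactly three times: as a prefix followed by $x$, once internally preceded by $x$ and followed by $y$, and as a suffix preceded by $y$. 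Your justification there (``any longer repeated prefix would force $uxy$ to recur, contradicting the fact that $u$ is bispecial'') is not a contradiction: bispeciality of $u$ does not by itself forbid a second occurrence of $ux$ in $uxyu$; only the occurrence count does. So the ``moreover'' clause cannot be a final add-on --- it must be established first (e.g.\ via the decomposition $u=pxyq=qyxp$ with $p,q$ central and the fact that $u$ is a complete return to the longer of $p$ and $q$, or by locating $uxyu$ inside $(uxy)^\omega$), and you explicitly leave that analysis undone, calling it ``where the real work lies.''

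Second, the ``only if'' direction contains an unjustified jump: from ``$u$ is simultaneously the longest repeated prefix, longest repeated suffix, longest left special and longest right special factor of $v$'' you conclude that $u$ is ``palindromic and bispecial, hence central.'' Neither the palindromicity of $u$ nor the Sturmianity of $v$ --- which you need in order to invoke Proposition~\ref{prop:PER}, since ``bispecial Sturmian'' requires $au,bu,ua,ub\in\St$ and not merely $au,bu,ua,ub\in\Fact(v)$ --- follows from the definition of semicentral as stated in this paper, which is given for arbitrary words. (Coincidence of the longest repeated prefix and suffix does not force a palindrome: consider $abab$.) A correct converse must first extract the balanced and palindromic structure from the coincidence of the four distinguished factors. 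In short, the skeleton is reasonable, but the load-bearing steps --- the occurrence analysis of $u$ in $uxyu$, and the centrality of $u$ in the converse --- are missing, and the logical order of your argument would have to be inverted so that the occurrence analysis comes first.
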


\begin{proposition}\label{prop:semi}
The semicentral prefixes of $w$ are precisely the words of the form $u_nxyu_n$, $n\ge 1$, where $x,y$ and $u_{n}$ are as in \eqref{eq:su}.
\end{proposition}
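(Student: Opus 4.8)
The plan is to characterize the semicentral prefixes of $w$ directly via the structure given by Proposition~\ref{prop:qxyq}, which says that a word is semicentral if and only if it has the form $uxyu$ for a central word $u$ and distinct letters $x,y$. So a semicentral \emph{prefix} of $w$ is precisely a prefix of the form $uxyu$ where $u$ is central and $xy \in \{ab, ba\}$. Since $uxyu$ begins with $u$, the word $u$ is itself a central prefix of $w$, and $uxy$ is a prefix of $w$ as well. The task thus reduces to determining which central prefixes $u$ of $w$ have the property that $uxyu$ is again a prefix of $w$, where $xy$ is the (forced) two-letter continuation of $u$ in $w$.

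First I would invoke Lemma~\ref{lem:pref}(2), which enumerates the central prefixes of $w$ as exactly the words $(u_nxy)^k u_{n-1}$ for $n\ge 0$ and $0<k\le d_n$, with $xy=ab$ if $n$ odd and $ba$ if $n$ even. Among these, the words of the form $u_nxy$ (taking $k=d_{n-1}+1$ relative to level $n-1$, or equivalently recognizing $u_n$ itself as a central prefix) are the natural candidates whose continuation builds $u_nxyu_n$. The key structural fact to exploit is equation~\eqref{eq:u}, namely $u_nxyu_{n+1}=u_{n+1}yxu_n$ is central, which controls how the next central word overlaps the previous one. I would show that when $u$ is a central prefix, the continuation of $u$ in $w$ is the pair $xy$ matching the parity as in~\eqref{eq:su}, and that $uxyu$ is a prefix of $w$ exactly when $u=u_n$ for some $n\ge 1$.

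The main work is the two inclusions. For one direction, I would verify that each $u_nxyu_n$ is indeed a prefix of $w$: since $u_nxy=s_n$ is a standard word and hence a prefix of $w$ (by Lemma~\ref{lem:pref}(1)), and since $u_nxyu_{n+1}$ is central and thus a palindrome, the word $u_n$ reoccurs after $u_nxy$, placing a second copy of $u_n$ immediately following; here the relation~\eqref{eq:u} and the palindromicity of central words (Proposition~\ref{prop:PER}) pin down that the internal structure is exactly $u_nxyu_n$ with the single internal occurrence of $u_n$ preceded by $x$ and followed by $y$, as demanded by Proposition~\ref{prop:qxyq}. For the converse, I would argue that no \emph{other} central prefix $u$ can yield a semicentral prefix: if $u=(u_nxy)^ku_{n-1}$ with $k<d_n$, or more generally $u\neq u_m$ for all $m$, then the continuation $uxyu$ would force a recurrence of $u$ at a position incompatible with the balance property of Sturmian words, or would conflict with the enumeration in Lemma~\ref{lem:pref}.

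The hard part will be handling the overlap bookkeeping cleanly, especially distinguishing $u_n$ (the central word with $s_n=u_nxy$) from the longer central prefixes $(u_nxy)^ku_{n-1}$ listed by Lemma~\ref{lem:pref}(2), and showing the latter never produce a semicentral prefix. I expect the cleanest route is to use the ``complete return'' clause of Proposition~\ref{prop:PER} together with~\eqref{eq:u}: a semicentral prefix $uxyu$ is, by Proposition~\ref{prop:qxyq}, governed entirely by which $u$ reoccurs after exactly the gap $xy$, and equation~\eqref{eq:u} shows this gap-$xy$ reoccurrence of a central prefix happens precisely at the standard levels $u=u_n$, since the next central prefix after $u_n$ is obtained by appending $xy$ and then a further copy of $u_n$ before diverging. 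Verifying that the parity of $xy$ always matches~\eqref{eq:su} and that the index range is $n\ge 1$ (the cases $n=-1,0$ giving the degenerate $u_{-1}=a^{-1}$, $u_0=b^{-1}$ being excluded) is then a direct consequence of the definitions.
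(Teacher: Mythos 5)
Your setup and forward direction match the paper's proof: reduce via Proposition~\ref{prop:qxyq} to prefixes of the form $u\xi\eta u$ with $u$ central, note that $u$ is then a central prefix of $w$ enumerated by Lemma~\ref{lem:pref}(2) as $(u_nxy)^k u_{n-1}$ with $0<k\le d_n$, and obtain that each $u_nxyu_n$ is a prefix of $w$ because it is a prefix of the central prefix $u_nxyu_{n+1}=u_{n+1}yxu_n$. The gap is in the converse, which you yourself call ``the hard part'' and then do not actually prove. You must rule out the case $0<k<d_n$, i.e., show that $(u_nxy)^ku_{n-1}\cdot yx\cdot (u_nxy)^ku_{n-1}$ cannot be a prefix of $w$ unless $k=d_n$ (in which case $u=u_{n+1}$). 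Neither of your proposed justifications does this. The appeal to the balance property fails because $u\xi\eta u$ is in general a perfectly balanced (indeed Sturmian) word even when $k<d_n$; the obstruction is not that it fails to be Sturmian, but that it fails to be a prefix of this particular $w$. The appeal to a ``conflict with the enumeration in Lemma~\ref{lem:pref}'' also does not apply directly, since $u\xi\eta u$ is semicentral, hence open and not central, so that enumeration says nothing about it.

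The paper closes this gap with a concrete computation you only gesture at: assuming $k<d_n$, on one hand $s_n^{k+1}s_{n-1}=(u_nxy)^{k+1}u_{n-1}yx$ is a prefix of $w$ by Lemma~\ref{lem:pref}(1), so the prefix $(u_nxy)^{k+1}u_{n-1}$ of $w$ is followed by $yx$; on the other hand, rewriting $u\xi\eta u=(u_nxy)^ku_{n-1}\,yx\,(u_nxy)^ku_{n-1}$ by means of the identity $u_{n-1}yxu_n=u_nxyu_{n-1}$ (equation~\eqref{eq:u} shifted by one index) shows that the same prefix $(u_nxy)^{k+1}u_{n-1}$ is followed by $xy$ inside $u\xi\eta u$ --- a contradiction. (This rewriting uses $k\ge 1$, which Lemma~\ref{lem:pref}(2) guarantees.) You correctly single out \eqref{eq:u} as the key ingredient, but without carrying out this step the proof is incomplete: the forward inclusion alone does not show that the $u_nxyu_n$ are the \emph{only} semicentral prefixes.
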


\begin{proof}
 Since $u_{n}$ is a central word, the word $u_nxyu_n$ is a semicentral word by definition, and it is a prefix of $u_nxyu_{n+1}=u_{n+1}yxu_{n}$, which in turn is a prefix of $w$ by Lemma \ref{lem:pref}.
 
 Conversely, assume that $w$ has a prefix of the form $u\xi\eta u$ for a central word $u$ and distinct letters $\xi,\eta\in \Sigma$. From Lemma \ref{lem:pref} and \eqref{eq:su}, we have that
 \[u\xi\eta u=(u_{n}xy)^{k}u_{n-1}\cdot \xi\eta \cdot (u_{n}xy)^{k}u_{n-1},\]
 for some $n\ge 1$, $k\le d_{n}$, and distinct letters $x,y\in \Sigma$ such that $xy=ab$ if $n$ is odd or $ba$ if $n$ is even. In particular, this implies that $\xi\eta=yx$.
 
If $k=d_{n}$, then $u=u_{n+1}yx u_{n+1}$, and we are done. So, suppose by contradiction that $k<d_{n}$. Now, on the one hand we have that $(u_{n}xy)^{k+1}u_{n-1}yx$ is a prefix of $w$ by Lemma \ref{lem:pref}, and so $(u_{n}xy)^{k+1}u_{n-1}$ is followed by $yx$ as a prefix of $w$; on the other hand we have
\begin{eqnarray*}
u\xi\eta u &=& (u_{n}xy)^{k}u_{n-1}\cdot yx \cdot (u_{n}xy)^{k}u_{n-1}  \\
&=& (u_{n}xy)^{k} \cdot u_{n-1}yxu_{n} xy \cdot(u_{n}xy)^{k-1}u_{n-1} \\
&=& (u_{n}xy)^{k} \cdot u_{n}xyu_{n-1} xy \cdot(u_{n}xy)^{k-1}u_{n-1} \\
&=& (u_{n}xy)^{k+1} \cdot u_{n-1}xy \cdot(u_{n}xy)^{k-1}u_{n-1},
\end{eqnarray*}
so that $(u_{n}xy)^{k+1}u_{n-1}$ is followed by $xy$ as a prefix of $w$, a contradiction.
\end{proof}

The next theorem shows the behavior of the runs in $\oc(w)$ by determining the structure of the last elements of the runs.

\begin{theorem}
\label{thm:bdaries}
Let $vx$, $x\in \Sigma$, be a prefix of $w$. Then:
\begin{enumerate}
\item $v$ is open and $vx$ is closed if and only if there exists $n\ge 1$ such that $v=u_{n}xyu_{n}$;
\item $v$ is closed and $vx$ is open if and only if there exists $n\ge 0$ such that $v=u_{n}xyu_{n+1}=u_{n+1}yxu_{n}$.
\end{enumerate}
\end{theorem}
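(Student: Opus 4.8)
My plan is to establish both ``if'' directions by a direct local computation, and then to derive the two ``only if'' directions simultaneously by proving that $\oc(w)$ can only change value at the prefixes listed in items~1 and~2. Write $x_n,y_n$ for the letters attached to the index $n$ in \eqref{eq:su}, so that $s_n=u_nx_ny_n$ and $x_{n+1}y_{n+1}=y_nx_n$; put $\sigma_n:=u_nx_ny_nu_n$ ($n\ge 1$), the semicentral prefix of Proposition~\ref{prop:semi}, and $\kappa_n:=u_nx_ny_nu_{n+1}=u_{n+1}y_nx_nu_n=u_{n+1}x_{n+1}y_{n+1}u_n$ ($n\ge 0$), the central prefix of item~2. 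Since $|u_n|<|u_{n+1}|$ we have $|\kappa_{n-1}|<|\sigma_n|<|\kappa_n|<|\sigma_{n+1}|$, so the lengths $|\sigma_n|,|\kappa_n|$ interleave and cut $\mathbb N_+$ into consecutive blocks. Two facts will be used repeatedly: by \eqref{eq:un+1} one has $\kappa_n=(u_nx_ny_n)^{d_n+1}u_{n-1}$, and the prefix $s_{n+1}s_n=u_{n+1}y_nx_nu_nx_ny_n$ of $w$ (Lemma~\ref{lem:pref}) shows that in $w$ the letter immediately following each of $\sigma_n$ and $\kappa_n$ is $x_n$.

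For the ``if'' part of~1, let $v=\sigma_n$. By Proposition~\ref{prop:qxyq} $v$ is open, and $u_n$ has a unique internal occurrence in $v$, followed by $y_n$. Since $v$ is followed by $x_n$, we get $vx=u_nx_ny_nu_nx_n$ with border $u_nx_n$; this border has no internal occurrence (such an occurrence would be an internal $u_n$ followed by $x_n$, but the only internal $u_n$ is followed by $y_n\ne x_n$, and the occurrence of $u_n$ ending at the last letter is followed by nothing), so $vx$ is closed. For the ``if'' part of~2, let $v=\kappa_n$, which is central, hence closed, with longest border $u_{n+1}$ (Proposition~\ref{prop:PER}); as a prefix, $u_{n+1}$ is followed by $x_{n+1}$, so by Lemma~\ref{lem:ce} the unique closed one-letter extension of $v$ is $vx_{n+1}$. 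The actual following letter is $x_n=y_{n+1}\ne x_{n+1}$, so $vx$ is open.

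For ``only if'' it now suffices, given the interleaving and the two ``if'' directions, to show that $\oc(w)$ is constant on the interior of each block, since then each value change occurs exactly at the right endpoint of a block, which is a $\sigma_n$ (an open-to-closed change) or a $\kappa_n$ (a closed-to-open change). The lowest block $(0,|\kappa_0|]$ consists of $a,\dots,a^{d_0}$, all closed. On a block $(|\sigma_n|,|\kappa_n|]$ the opening prefix $\sigma_nx_n$ is closed of period $|s_n|$; as every prefix in the block is a prefix of $(s_n)^\omega$ of length greater than $|s_n|$ and $s_n$ is primitive, all of them have period $|s_n|$, so applying Lemma~\ref{lem:ce} successively from $\sigma_nx_n$ shows they are all closed.

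The remaining and hardest case is a block $(|\kappa_n|,|\sigma_{n+1}|]$, where I must show every prefix is open. Suppose not, and let $P_0=P_0'x$ be the shortest closed prefix in this block, so $P_0'$ is open. By Lemma~\ref{lem:nbo}, $P_0$ has the same period as $P_0'$, whence the longest border of $P_0$ is $\gamma x$, with $\gamma$ the longest border of $P_0'$. As $P_0'$ is open, $\gamma$ has an internal occurrence (Remark~\ref{obs}); as $P_0$ is closed, $\gamma x$ has none, so every internal occurrence of $\gamma$ in $P_0'$ is followed by a letter $\ne x$ while its prefix occurrence is followed by $x$. Hence $\gamma$ is right special in $P_0'$, and being a prefix of the standard word $w$ it is also left special, so $\gamma$ is a bispecial factor of $w$, i.e.\ a central word (Proposition~\ref{prop:PER}). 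I would then show that $P_0'$ is forced to be the semicentral word on $\gamma$, namely $P_0'=u_mx_my_mu_m=\sigma_m$ for some $m$; by Proposition~\ref{prop:semi} this is a semicentral prefix of $w$, whose length would lie strictly between $|\kappa_n|$ and $|\sigma_{n+1}|$ --- impossible, since no $|\sigma_k|$ lies strictly in that range. This contradiction would complete the ``only if'' directions. The delicate step, which I expect to be the main obstacle, is precisely the identification $P_0'=\sigma_m$: one must rule out that $\gamma$ has more than one internal occurrence in $P_0'$ (equivalently, that $P_0'$ is a longer complete return to $\gamma$) and pin $\gamma$ down as one of the standard central words $u_m$ rather than a general central prefix $(u_mx_my_m)^ku_{m-1}$. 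This information is encoded in the return-word structure of central words, and I would extract it from Lemma~\ref{lem:pref}, the uniqueness of the internal occurrence in Proposition~\ref{prop:qxyq}, and the fact that $P_0$ is a \emph{complete} return to $\gamma x$.
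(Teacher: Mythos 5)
Your ``if'' directions are correct, and your reduction of the ``only if'' directions to the statement that $\oc(w)$ is constant on the interior of each block is a reasonable architecture (indeed the paper derives item~2's converse from item~1's converse by essentially the same interleaving idea). The closed blocks $(|\sigma_n|,|\kappa_n|]$ are also handled correctly, modulo a small point you should make explicit: having $|s_n|$ as \emph{a} period is not the same as having it as \emph{the} period required by Lemma~\ref{lem:ce}; you get equality because the minimal period cannot decrease under extension and $\sigma_nx_n$ already has minimal period $|s_n|$. The genuine gap is exactly where you flag it: you never prove that the shortest closed prefix $P_0=P_0'x$ in a block $(|\kappa_n|,|\sigma_{n+1}|]$ satisfies $P_0'=\sigma_m$. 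Reaching ``$\gamma$ is a central prefix of $w$'' is far from enough: by Lemma~\ref{lem:pref}, $\gamma$ could a priori be any $(u_mxy)^ku_{m-1}$ with $k<d_m$ (and the letter following such a prefix in $w$ is the same $y_m$ as the one following $u_{m+1}$, so this case is not ruled out by what you have), and even granting $\gamma=u_m$ you must still locate the second occurrence of $\gamma x$ in $w$, i.e., show the complete return to $\gamma x$ ends at length $2|\gamma|+3$ --- which is essentially the content of the theorem. So the hardest part (the converse of item~1) is announced, not proved.

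For comparison, the paper closes precisely this step by showing that $v$ (your $P_0'$) is \emph{semicentral} in the sense of Proposition~\ref{prop:qxyq}: it proves that $u$ (your $\gamma$) is simultaneously the longest repeated prefix, the longest repeated suffix, the longest left special factor and the longest right special factor of $v$. The key point is that any special factor of $v$ longer than $u$ would have to be a prefix (resp.\ suffix) of $v$ by Remark~\ref{rem:rsp}, hence would begin with $ux$, forcing a second occurrence of $ux$ inside $v$ and thus an internal occurrence of the border $ux$ in $vx$ --- contradicting closedness of $vx$; the alternative extension $uy$ cannot be a prefix of $v$ either. Once $v$ is known to be semicentral, Proposition~\ref{prop:semi} (whose converse direction does the work of pinning down $u_m$ against the competitors $(u_mxy)^ku_{m-1}$) identifies $v$ as $u_nxyu_n$. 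If you want to keep your framework, the cleanest repair is to graft this argument in place of your appeal to ``the return-word structure of central words'': from your $\gamma$, rule out longer special factors of $P_0'$ as above, conclude that $P_0'$ is semicentral, and invoke Proposition~\ref{prop:semi}.
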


\begin{proof}
1. If $v=u_{n}xyu_{n}$, then $v$ is semicentral and therefore open. The word $vx$ is closed since its longest repeated prefix $u_{n}x$ occurs only as a prefix and as a suffix in it.
 
Conversely, let $vx$ be a closed prefix of $w$ such that $v$ is open, and let $ux$ be the longest repeated suffix of $vx$. Since $vx$ is closed, $ux$ does not have internal occurrences in $vx$. Since $u$ is the longest repeated prefix of $v$ (suppose the longest repeated prefix of $v$ is a $z$ longer than $u$, then $vx$, which is a prefix of $z$, would be repeated in $v$ and hence in $vx$, contradiction) and $v$ is open, $u$ must have an internal occurrence in $v$ followed by a letter $y\neq x$. Symmetrically, if $\xi$ is the letter preceding the occurrence of $u$ as a suffix of $v$, since $u$ is the longest repeated suffix of $v$ one has that $u$ has an internal occurrence in $v$ preceded by a letter $\eta \neq \xi$. Thus $u$ is left and right special in $w$. Moreover, $u$ is the longest special factor in $v$. Indeed, if $u'$ is a left special factor of $v$, then $u$ must be a prefix of $u'$. But $ux$ cannot appear in $v$ since $vx$ is closed, and if $uy$ was a left special factor of $v$, it would be a prefix of $v$. Symmetrically, $u$ is  the longest right special factor in $v$. 
Thus $v$ is semicentral, and the claim follows from Proposition \ref{prop:semi}.
 
2. If $v=u_{n}xyu_{n+1}=u_{n+1}yxu_{n}$, then $v$ is a central word and therefore it is closed. Its longest repeated prefix is $u_{n+1}$. The longest repeated prefix of $vx$ is either $a^{d_0-1}$ (if $n=0$) or $u_{n}x$ (if $n>0$); in both cases, it has an internal occurrence as a prefix of the suffix $u_{n+1}x$. Therefore, $vx$ is open.
 
Conversely, suppose that $vx$ is any open prefix of $w$ such that $v$ is closed. If $vx=a^{d_0}b$, then $v=u_0xyu_1=u_1yxu_0$ and we are done. Otherwise, by 1), there exists $n\geq 1$ such that
$|u_n\xi y u_n|<|v|<|u_{n+1}y\xi u_{n+1}|$, where
$\{\xi,y\}=\{a,b\}$. We know that $u_n\xi y u_{n+1}$ is closed
and $u_n\xi y u_{n+1}\xi$ is open; it follows
$v=u_n\xi y u_{n+1}=u_nxyu_{n+1}$, as otherwise there should be in $w$ a semicentral prefix strictly between $u_nxyu_n$ and
$u_{n+1}yxu_{n+1}$.
\end{proof}

Note that, for every $n\ge 1$, one has:
\begin{eqnarray*}
  u_{n+1}yxu_{n+1}&=&u_{n+1}yxu_{n}(u_{n}^{-1}u_{n+1})\\
 &=&u_{n}xyu_{n+1}(u_{n}^{-1}u_{n+1})\\
 &=&u_{n}xyu_{n}(u_{n}^{-1}u_{n+1})^{2}.
\end{eqnarray*}
Therefore, starting from an (open) semi-central prefix $u_{n}xyu_{n}$, one has a run of closed prefixes, up to the prefix $u_{n}xyu_{n+1}=u_{n+1}yxu_{n}=u_{n}xyu_{n}(u_{n}^{-1}u_{n+1})$, followed by a run of the same length of open prefixes, up to the prefix $u_{n+1}yxu_{n+1}=u_{n+1}yxu_{n}(u_{n}^{-1}u_{n+1})=u_{n}xyu_{n}(u_{n}^{-1}u_{n+1})^{2}$. See Table \ref{tab:example} for an illustration.

\begin{table}[ht]
\setlength{\tabcolsep}{10pt}
%\begin{normalsize}
\begin{center}
\begin{tabular}{ l c l }
    prefix of $w$  &   open/closed & example  \\    \hline 
     $u_{n}xyu_{n}$           & open   & $aaba$        \\
     $u_{n}xyu_{n}x$          & closed   &    $aabaa$      \\   
     $u_{n}xyu_{n}xy$          & closed   & $aabaab$        \\
     \ldots      &   \ldots          & \ldots         \\
     $u_{n}xyu_{n+1}=u_{n+1}yxu_n$          & closed      & $aabaabaa$       \\
     $u_{n+1}yxu_{n}y$          & open   & $aabaabaaa$          \\
     $u_{n+1}yxu_{n}yx$           & open      & $aabaabaaab$       \\
	 \ldots      &   \ldots          & \ldots      \\
	 $u_{n+1}yxu_{n+1}$      & open   & $aabaabaaabaa$ \\
	 $u_{n+1}yxu_{n+1}y$        & closed   & $aabaabaaabaab$ \\
    \hline \\   
\end{tabular}
\end{center}
%\end{normalsize} 
\caption{The structure of the prefixes of the standard Sturmian word $w=aabaabaaabaabaa\cdots$ with respect to the $u_{n}$ prefixes. Here $d_{0}=d_{1}=2$ and $d_{2}=1$.\label{tab:example}}
\end{table}

In Table \ref{tab:oc}, we show the first few elements of the sequence $\oc(w)$ for the standard Sturmian word $w=aabaabaaabaabaa\cdots$ of slope $\alpha=(9+\sqrt{5})/38=\left[0;3,2,\bar{1}\right]$, i.e., with $d_{0}=d_{1}=2$ and $d_{i}=1$ for every $i>1$. One can notice that the runs of closed prefixes are followed by runs of the same length of open prefixes. 

\begin{table}[ht]
\begin{small}
\begin{center}
\begin{tabular}{r*{15}{@{\hspace{1.4em}}c}}
 $n$    & 1 & 2 & 3 & 4 & 5 & 6 & 7 &
8 & 9 & 10 & 11 & 12 & 13 & 14 & 15 \\ \hline\\[-1ex]
 $w$    & $a$ & $a$ & $b$ & $a$ & $a$ & $b$ & $a$ &
$a$ & $a$ & $b$ & $a$ & $a$ & $b$ & $a$ & $a$ \\
\hline \\[-1ex]
$\oc(w)$ & 1 & 1 & 0 & 0 & 1 & 1 & 1 & 1 & 0 & 0 & 0 & 0 & 1 & 1 & 1\\[0.5ex]
\hline \\
\end{tabular}
\end{center}
\end{small} 
\caption{\label{tab:oc}The oc-sequence of the word  $w= aabaabaaabaabaa\cdots$}
\end{table}

The words $u_{n}^{-1}u_{n+1}$ are reversals of standard words, for every $n\ge 1$. Indeed, let $r_n=\wt{s_n}$ for every $n\geq -1$, so that $r_{-1}=b$, $r_0=a$, and
$r_{n+1}=r_{n-1}r_n^{d_n}$ for $n\geq 0$. Since by \eqref{eq:su} $s_{n}=u_{n}xy$ and $s_{n+1}=u_{n+1}yx$, one has $r_n=yxu_{n}$ and $r_{n+1}=xyu_{n+1}$, and therefore, by \eqref{eq:u},
\begin{equation}\label{eq:ur}
 u_{n}r_{n+1}=u_{n+1}r_{n}.
\end{equation}
Multiplying \eqref{eq:ur}  on the left by $u_{n}^{-1}$ and on the right by $r_{n}^{-1}$, one obtains
\begin{equation}\label{eq:prop}
 r_{n+1}r_n^{-1}=u_n^{-1}u_{n+1}.
\end{equation}
Since $r_{n+1}=r_{n-1}r_{n}^{d_{n}}$, one has that $r_{n+1}r_{n}^{-1}=r_{n-1}r_{n}^{d_{n}-1}$, and therefore $r_{n+1}r_{n}^{-1}$ is the reversal of a standard word. By \eqref{eq:prop}, $u_n^{-1}u_{n+1}$ is the reversal of a standard word.

Now, note that for $n=0$, one has $u_{0}xyu_{1}=u_{1}yxu_{0}=a^{d_{0}}$ and $(u_{0}^{-1}u_{1})=ba^{d_{0}-1}$. Thus, we have the following:
\begin{theorem}
\label{theor:decomp}
Let $w$ be the standard Sturmian word of slope $\alpha$, with $0<\alpha<1/2$, and let $[0;d_{0}+1,d_{1},\ldots]$, with $d_{0}>0$, be the continued fraction expansion of $\alpha$. The word $ba^{-1}w$, obtained from $w$ by replacing the first letter $a$ with the letter $b$, can be written as an infinite product of squares of reversed standard words in the following way:
\[ba^{-1}w=\prod_{n\ge 0}(u_{n}^{-1}u_{n+1})^{2},\]
where $(u_{n})_{n\ge -1}$ is the sequence defined in \eqref{eq:su}.

In other words, one can write
\[w=a^{d_{0}}ba^{d_{0}-1}\prod_{n\ge 1}(u_{n}^{-1}u_{n+1})^{2}.\]
\end{theorem}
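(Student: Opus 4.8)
The plan is to read the theorem as a statement about the sequence of semicentral prefixes of $w$, and to assemble the free-group identities already computed just before the statement. Write $S_n := u_n\ell_n u_n$ for the semicentral word of index $n$, where $\ell_n\in\{ab,ba\}$ is the pair of letters with $s_n=u_n\ell_n$, so that $\ell_n=xy$ forces $\ell_{n+1}=yx$. The displayed computation immediately preceding the statement establishes, for every $n\ge 1$, the recursion
\[S_{n+1}=u_{n+1}yxu_{n+1}=u_nxyu_n(u_n^{-1}u_{n+1})^2=S_n\,(u_n^{-1}u_{n+1})^2,\]
which rests only on the central identity~\eqref{eq:u}. So my first step is to record this recursion and to check that it persists at $n=0$ in the free group: there $S_0=u_0\,ba\,u_0=b^{-1}bab^{-1}=ab^{-1}$, and a direct reduction gives $S_0(u_0^{-1}u_1)^2=ab^{-1}(ba^{d_0-1})^2=a^{d_0}ba^{d_0-1}=S_1$, using the values $u_1=a^{d_0-1}$ and $u_0^{-1}u_1=ba^{d_0-1}$ noted after Proposition~\ref{prop:semi}.

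Next I would iterate. From $S_{n+1}=S_n(u_n^{-1}u_{n+1})^2$ one obtains, for every $N\ge 1$,
\[S_N=S_1\prod_{n=1}^{N-1}(u_n^{-1}u_{n+1})^2=a^{d_0}ba^{d_0-1}\prod_{n=1}^{N-1}(u_n^{-1}u_{n+1})^2.\]
The point that needs justification is that these partial products are genuine prefixes of $w$ of unbounded length, so that the product converges to $w$ itself; this is where I expect the only real care to be required. Each $S_n=u_nxyu_n$ is a prefix of $w$ by Proposition~\ref{prop:semi}, the factors $(u_n^{-1}u_{n+1})^2$ are honest elements of $\Sigma^*$ (being squares of reversals of standard words, via~\eqref{eq:prop}), and $S_n$ is a prefix of $S_{n+1}$ by the recursion; hence the telescoping takes place inside $\Sigma^*$ with no cancellation at the seams, and $|S_n|\ge|u_n|\to\infty$. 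Passing to the limit yields the second displayed formula $w=a^{d_0}ba^{d_0-1}\prod_{n\ge 1}(u_n^{-1}u_{n+1})^2$.

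Finally I would deduce the first formula by left-multiplication. Since $ba^{-1}\cdot a^{d_0}=ba^{d_0-1}$ and $(u_0^{-1}u_1)^2=(ba^{d_0-1})^2=ba^{d_0-1}ba^{d_0-1}$, multiplying the previous identity on the left by $ba^{-1}$ gives
\[ba^{-1}w=ba^{d_0-1}ba^{d_0-1}\prod_{n\ge 1}(u_n^{-1}u_{n+1})^2=(u_0^{-1}u_1)^2\prod_{n\ge 1}(u_n^{-1}u_{n+1})^2=\prod_{n\ge 0}(u_n^{-1}u_{n+1})^2,\]
the claimed decomposition. The whole argument is thus a short assembly: the content is concentrated in the recursion $S_{n+1}=S_n(u_n^{-1}u_{n+1})^2$ (already computed) and in the convergence of the telescoping product to $w$, while the $n=0$ term is absorbed separately through the explicit values $u_1=a^{d_0-1}$ and $u_0^{-1}u_1=ba^{d_0-1}$.
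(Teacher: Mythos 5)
Your proposal is correct and follows essentially the same route as the paper, which derives the theorem from the displayed identity $u_{n+1}yxu_{n+1}=u_nxyu_n(u_n^{-1}u_{n+1})^2$, the fact (Proposition~\ref{prop:semi}) that the semicentral words $u_nxyu_n$ are prefixes of $w$ of unbounded length, and the explicit values $u_0xyu_1=a^{d_0}$, $u_0^{-1}u_1=ba^{d_0-1}$ for the base case. Your only additions --- spelling out the telescoping, the convergence of the partial products, and the free-group check at $n=0$ --- are exactly the details the paper leaves implicit.
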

 
\begin{example}
Take the Fibonacci word. Then, $u_{1}=\epsilon$, $u_{2}=a$, $u_{3}=aba$, $u_{4}=abaaba$, $u_{5}=abaababaaba$, etc. So, $u_{1}^{-1}u_{2}=a$, $u_{2}^{-1}u_{3}=ba$, $u_{3}^{-1}u_{4}=aba$, $u_{4}^{-1}u_{5}=baaba$, etc. Indeed, $u_{n}^{-1}u_{n+1}$ is the reversal of the Fibonacci finite word $f_{n-1}$. By Theorem \ref{theor:decomp}, we have:
\begin{eqnarray*}
 F &=& ab\prod_{n\ge 1}(u_{n}^{-1}u_{n+1})^{2}\\
 &=& ab\prod_{n\ge 0}(\wt{f_{n}})^{2}\\
 &=& ab\cdot (a\cdot a)(ba \cdot ba)(aba \cdot aba)(baaba\cdot baaba)\cdots
\end{eqnarray*}
%$$F = ab\prod_{n\ge 1}(u_{n}^{-1}u_{n+1})^{2} = ab(a\cdot a)(ba \cdot ba)(aba \cdot aba)(baaba\cdot baaba)\cdots$$  
i.e., $F$ can be obtained by concatenating $ab$ and the squares of the reversals of the finite Fibonacci words $f_n$ starting from $n=0$.

Note that $F$ can also be obtained by concatenating the reversals of the finite Fibonacci words $f_n$ starting from $n=0$:
\begin{eqnarray*}
 F &=& \prod_{n\ge 0} \wt{f_{n}}\\
 &=& a \cdot ba \cdot aba \cdot baaba \cdot ababaaba \cdots
\end{eqnarray*}
%$$F = a \cdot ba \cdot aba \cdot baaba \cdot ababaaba \cdots$$ 
and also by concatenating $ab$ and the finite Fibonacci words $f_n$ starting from $n=0$:
\begin{eqnarray*}
 F &=& ab\prod_{n\ge 0} f_{n}\\
 &=& ab \cdot a \cdot ab \cdot aba \cdot abaab \cdot abaababa \cdots
\end{eqnarray*}
For a survey on various factorizations of the Fibonacci infinite word that make use of finite Fibonacci words the reader can see \cite{Fi15}.
\end{example}

One can also characterize the oc-sequence of a standard Sturmian word $w$ in terms of the  directive sequence of $w$.

Recall that the \emph{continuants} of an integer sequence $(a_n)_{n\geq 0}$ are defined as
$K\left[\ \, \right]=1$, $K\left[a_0\right]=a_0$, and, for every $n\geq 1$,
\[K\left[a_0,\ldots,a_n\right]=a_nK\left[a_0,\ldots,a_{n-1}\right]+K\left[a_0,\ldots,a_{n-2}\right].\] 
Continuants are related to continued fractions, as the $n$-th convergent of
$[a_0; a_1,a_2,\ldots]$ is equal to $K\left[a_0,\ldots, a_n\right]/K\left[a_1,\ldots, a_n\right]$.

Let $w$ be a standard Sturmian word and $(s_n)_{n\ge -1}$ its standard sequence. Since $|s_{-1}|=|s_0|=1$ and, for every $n\ge 1$,
$|s_{n+1}|=d_n|s_n|+|s_{n-1}|,$ then one has, by definition, that for every $n\geq 0$
\[|s_n|=K\left[1,d_0,\ldots, d_{n-1}\right].\]

For more details on the relationships between continuants and Sturmian words the reader can see \cite{dL13}.

By Theorems~\ref{thm:bdaries} and~\ref{theor:decomp}, all prefixes up to
$a^{d_0}$ are closed; then all prefixes from $a^{d_0}b$ till $a^{d_0}ba^{d_0-1}$ are open,
then closed up to $a^{d_0}ba^{d_0-1}\cdot u_1^{-1}u_2$, open again up to 
$a^{d_0}ba_{d_0-1}\cdot (u_1^{-1}u_2)^2$, and so on.
Thus, the lengths of the successive runs of closed and open prefixes are: 
$d_0$, $d_0$, $|u_2|-|u_1|$, $|u_2|-|u_1|$, $|u_3|-|u_2|$, $|u_3|-|u_2|$, etc. 
Since $d_0=K\left[1,d_0-1\right]$ and, for every $n\geq 1$,
\[
 |u_{n+1}|-|u_n|= |s_{n+1}|-|s_n|=(d_n-1)|s_n|+|s_{n-1}|
 =K\left[1,d_0,\ldots,d_{n-1},d_n-1\right],
\]
we have the following:

\begin{corollary}\label{cor:formula}
Let $w$ be a standard Sturmian word of slope $\alpha=[0;d_0+1,d_1,\ldots]$ and let $k_n=K\left[1,d_0,\ldots,d_{n-1},d_n-1\right]$ for every $n\geq 0$. Then
\[oc(w)=\prod_{n\geq 0}1^{k_n}0^{k_n}.\]
\end{corollary}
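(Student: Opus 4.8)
The plan is to derive Corollary~\ref{cor:formula} directly from the structural description of $\oc(w)$ already established, treating it essentially as a bookkeeping statement that converts lengths of runs into continuant notation. The key fact, made explicit in the paragraph preceding the corollary, is that $\oc(w)$ decomposes into alternating runs of closed and open prefixes, where each closed run is immediately followed by an open run of \emph{equal} length. This equal-length pairing is exactly the content of Theorem~\ref{thm:bdaries} together with the algebraic identity $u_{n+1}yxu_{n+1}=u_{n}xyu_{n}(u_{n}^{-1}u_{n+1})^{2}$ displayed after that theorem. So the first step is simply to record that $\oc(w)=\prod_{n\ge 0}1^{\ell_n}0^{\ell_n}$ for some sequence of run-lengths $\ell_n$, and that no further alternation structure needs to be proved.

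Next I would identify the run-lengths $\ell_n$ explicitly. By Theorem~\ref{thm:bdaries}, the closed run indexed by $n\ge 1$ runs from the prefix just after the semicentral prefix $u_nxyu_n$ up to and including $u_nxyu_{n+1}=u_{n+1}yxu_n$, and the subsequent open run runs up to $u_{n+1}yxu_{n+1}$. Since $u_nxyu_{n+1}=u_nxyu_n\cdot(u_n^{-1}u_{n+1})$ and $u_{n+1}yxu_{n+1}=u_nxyu_n\cdot(u_n^{-1}u_{n+1})^2$, each run has length $|u_n^{-1}u_{n+1}|=|u_{n+1}|-|u_n|$. For the base case $n=0$, the identities $u_0xyu_1=u_1yxu_0=a^{d_0}$ and $u_0^{-1}u_1=ba^{d_0-1}$ show the first closed run (the prefixes up to $a^{d_0}$) has length $d_0$ and the first open run (up to $a^{d_0}ba^{d_0-1}$) also has length $d_0$. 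Hence $\ell_0=d_0$ and $\ell_n=|u_{n+1}|-|u_n|$ for $n\ge 1$.

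The final step is to rewrite each $\ell_n$ as the claimed continuant $k_n=K[1,d_0,\ldots,d_{n-1},d_n-1]$. Using $|u_{n+1}|-|u_n|=|s_{n+1}|-|s_n|$ (immediate from $s_n=u_nxy$, so $|s_n|=|u_n|+2$) together with the recurrence $|s_{n+1}|=d_n|s_n|+|s_{n-1}|$, one gets $|s_{n+1}|-|s_n|=(d_n-1)|s_n|+|s_{n-1}|$ for $n\ge 1$. Since $|s_n|=K[1,d_0,\ldots,d_{n-1}]$ and $|s_{n-1}|=K[1,d_0,\ldots,d_{n-2}]$, the defining recursion of continuants gives $(d_n-1)|s_n|+|s_{n-1}|=K[1,d_0,\ldots,d_{n-1},d_n-1]=k_n$. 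For $n=0$ one checks directly that $K[1,d_0-1]=(d_0-1)\cdot 1+1=d_0=\ell_0$, matching the base case. Substituting $\ell_n=k_n$ into the run decomposition yields $\oc(w)=\prod_{n\ge 0}1^{k_n}0^{k_n}$.

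I expect no genuine obstacle here, since all the combinatorial work is carried by Theorems~\ref{thm:bdaries} and~\ref{theor:decomp}; the only point requiring care is the boundary case $n=0$, where $u_{-1}$ and $u_0$ are formal inverses $a^{-1}$ and $b^{-1}$ rather than honest words, so the length formula $|u_{n+1}|-|u_n|$ must be justified through the explicit computation $u_0xyu_1=a^{d_0}$ rather than by blindly applying the $n\ge 1$ recurrence. Ensuring that the continuant identity $K[1,d_0-1]=d_0$ lines up with this base case is the one place where an off-by-one error could creep in, so that is where I would concentrate the verification.
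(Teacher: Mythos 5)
Your proposal is correct and follows essentially the same route as the paper: the paper likewise reads off the run lengths $d_0, d_0, |u_2|-|u_1|, |u_2|-|u_1|, \ldots$ from Theorem~\ref{thm:bdaries} and the identity $u_{n+1}yxu_{n+1}=u_nxyu_n(u_n^{-1}u_{n+1})^2$, and then converts them to continuants via $|s_{n+1}|-|s_n|=(d_n-1)|s_n|+|s_{n-1}|=K[1,d_0,\ldots,d_{n-1},d_n-1]$ with the base case $d_0=K[1,d_0-1]$. Your extra care about the $n=0$ boundary matches the paper's explicit treatment of the prefixes up to $a^{d_0}ba^{d_0-1}$.
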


We now give a characterization of the prefixes of a standard Sturmian words in terms of the oc-sequence.

\begin{theorem}
\label{thm:ocst}
Let $\oc(w)=1^{k_0}0^{k'_0}1^{k_1}0^{k'_1}\cdots 1^{k_{n}}0^{k'_{n}}1$. Then $w$ is a prefix of a standard Sturmian word if and only if $k_j=k'_j$ for every $0\leq j\leq n$.
\end{theorem}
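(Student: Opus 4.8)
The plan is to prove the two implications separately: the forward one follows quickly from Corollary~\ref{cor:formula}, while the converse requires an inductive reconstruction of $w$, which is where essentially all the work lies.

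For the ``only if'' direction, assume $w$ is a prefix of a standard Sturmian word $W$. Whether the length-$j$ prefix is open or closed depends only on that prefix, so $\oc(w)$ is exactly the length-$|w|$ prefix of $\oc(W)$. By Corollary~\ref{cor:formula}, $\oc(W)=\prod_{m\ge0}1^{K_m}0^{K_m}$ for suitable integers $K_m\ge1$. A prefix of such a word that terminates in $1$ is a concatenation of complete runs followed by a nonempty, possibly truncated, run of $1$'s; identifying the complete runs with $1^{k_0}0^{k'_0}\cdots1^{k_n}0^{k'_n}$ forces $k_j=k'_j=K_j$ for every $0\le j\le n$, as claimed.

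For the ``if'' direction I would argue by induction, reconstructing $w$ and showing that the equal-pairs hypothesis pins it down to the prefix of the standard Sturmian word $W$ whose partial quotients are read off from the runs: put $d_0=k_0$ and, once $d_0,\dots,d_{m-1}$ (hence $s_0,\dots,s_m$ and $u_0,\dots,u_m$) are known, recover $d_m$ from $k_m=(d_m-1)|s_m|+|s_{m-1}|$. The base case is that $\oc(w)$ beginning with $1^{k_0}0$ forces $w$ to begin with $a^{k_0}b$, so that $C_0=a^{k_0}$ and $d_0=k_0$. For the inductive step I keep the invariant that the prefix of $w$ ending the $m$-th run of $0$'s is the semicentral word $\sigma_{m+1}=u_{m+1}yxu_{m+1}$. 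Starting from the open word $\sigma_m$, the ensuing run of $1$'s is forced: its first letter is the unique closed extension supplied by Lemma~\ref{lem:speclo}, and every later closed prefix is determined by Lemma~\ref{lem:ce} since the period is preserved, the run ending at the central word $C_m=u_mxyu_{m+1}=u_{m+1}yxu_m$ of \eqref{eq:u}. The passage to the next $0$ is forced by Lemma~\ref{cor:nbo}, as exactly one of the two extensions of the central word $C_m$ is closed and the value $0$ compels the other. Finally the hypothesis $k_m=k'_m$ makes the run of $0$'s have length $|u_{m+1}|-|u_m|$, which by the identity $u_{m+1}yxu_{m+1}=u_mxyu_m\,(u_m^{-1}u_{m+1})^2$ recorded after Theorem~\ref{thm:bdaries} lands precisely on $\sigma_{m+1}$; the trailing $1$ of $\oc(w)$ is then the forced closed extension of the last semicentral prefix, so $w$ is a prefix of $W$.

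The step I expect to be the main obstacle is the forcing of the letters inside a run of $0$'s. In contrast with the closed steps, an open prefix $v$ need not have a unique open extension: by Lemma~\ref{cor:nbo} at most one of $va,vb$ is closed, but it may well be that neither is, in which case the value $0$ does not by itself determine the next letter. Reconstructing the open runs is therefore not a local matter, and I would settle it by a look-ahead argument: tracking the longest repeated prefix, I would show that throughout the run it stays equal to the right-special factor $u_{m+1}$, so that the continuation is forced to copy the earlier occurrence of $u_{m+1}$ and must equal $u_m^{-1}u_{m+1}$, any other choice either closing a prefix before the end of the run (contradicting the $0$'s) or destroying the structure needed for the prescribed closed prefix at the run's end. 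A useful byproduct is that this forcing shows the recovered $d_m$ to be positive integers, so that $W$ is genuinely well defined; this realizability is not visible from the run lengths alone and is really a consequence of the mere existence of a word with the prescribed oc-sequence.
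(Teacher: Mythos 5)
Your outline matches the paper's strategy at a high level (the ``only if'' part from Corollary~\ref{cor:formula}; induction for the converse, with each run of $1$'s forced letter by letter via Lemmas~\ref{cor:nbo} and~\ref{lem:ce}), but the step you yourself single out as the main obstacle --- determining the letters inside a run of $0$'s --- is left as a declaration of intent, and the mechanism you propose for it does not suffice. Tracking the longest repeated prefix and showing it ``stays equal to $u_{m+1}$'' cannot force the letters: in the word $w=aabaabaaabaa\cdots$ of Table~\ref{tab:oc}, after the open prefix $aabaabaaa$ both extensions $aabaabaaaa$ and $aabaabaaab$ are open, and both still have longest repeated prefix $u_2=aabaa$, so your invariant is preserved by the wrong letter as well. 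The phrase ``destroying the structure needed for the prescribed closed prefix at the run's end'' points in the right direction, but it is precisely the assertion that requires proof.

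The paper closes this gap with Remark~\ref{rem:nth1}: since $\oc(w)$ ends in $1$ and $|w|=2\sum_{i=0}^n k_i+1$, \emph{any} word $\hat w$ with this oc-sequence is a complete return to its prefix $u$ of length $\sum_{i=0}^n k_i$, hence $\hat w=u\xi u$; the entire last run of $0$'s lies inside the second copy of $u$ and is therefore copied from the already-determined beginning, while $u$ and $\xi$ are pinned down by the period argument of Lemma~\ref{lem:ce} applied to $u\xi q$ and $u\xi qx$, where $w'=qxyqx$ is the inductive prefix. Note also that your plan of reading the partial quotients $d_m$ off the run lengths and comparing $w$ with the resulting standard word presupposes that each $k_m$ has the form $(d_m-1)|s_m|+|s_{m-1}|$ for a positive integer $d_m$; you acknowledge that this realizability must itself come out of the forcing argument, i.e.\ out of the missing step. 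The paper sidesteps the issue entirely by proving directly, via Lemma~\ref{lem:qocc}, that the uniquely determined $w$ equals $ryxry$ for a central word $r$, and hence is a prefix of $(ryx)^\omega$ and of a standard Sturmian word.
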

We need the following lemma.
\begin{lemma}
\label{lem:qocc}
Let $q$ be a central word and $\{x,y\}=\Sigma$. The word  $(qxy)^\omega$ has infinitely many prefixes ending in $xq$, and each of them is a central word of the form $(qxy)^np=p(yxq)^n$ for some $n>0$ and a central word $p$.
\end{lemma}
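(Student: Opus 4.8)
The plan is to anchor the arbitrary central word $q$ inside a standard sequence and then read off everything from the commutation identity~\eqref{eq:u}. Since $q$ is central, the word $qxy$ is standard (central words are exactly the words $u$ for which both $uab$ and $uba$ are standard), so I may realize $q=u_m$ with $s_m=qxy$ for a suitable index $m$, its parity chosen so that the pair $xy$ of the lemma agrees with~\eqref{eq:su}. Set $p:=u_{m-1}$, which is a central word and, by~\eqref{eq:un+1}, a prefix of $q=u_m$.

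First I would establish the key identity. Applying~\eqref{eq:u} with $n=m-1$, and noting that the pair at index $m-1$ is the reverse $yx$ of the pair at index $m$, gives $u_mxy\,u_{m-1}=u_{m-1}\,yx\,u_m$, that is, $(qxy)\,u_{m-1}=u_{m-1}\,yxq$. A one-line induction then yields
\[(qxy)^{n}u_{m-1}=u_{m-1}(yxq)^{n}\qquad(n\ge 0),\]
which is exactly the claimed palindromic form, with a single $p=u_{m-1}$ valid for all $n$. Because $u_{m-1}$ is a prefix of $qxy$, each $(qxy)^{n}u_{m-1}$ is a prefix of $(qxy)^{\omega}$; it equals $(u_mxy)^{n}u_{m-1}$ and hence is a central word by Lemma~\ref{lem:pref}(2) (take the standard sequence with $d_m\ge n$). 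Finally it ends in $xq$, since for $n\ge 1$ it terminates in $yxq$. This already produces infinitely many prefixes ending in $xq$, all central and of the required shape.

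The harder half is \emph{completeness}: showing these are the only prefixes ending in $xq$. Here I would exploit that $(qxy)^{\omega}$ is periodic with period $T=|q|+2$, so that occurrences of the factor $xq$ (of length $T-1<T$) recur with period $T$; it therefore suffices to prove that $xq$ occurs exactly once per period, equivalently that $q=u_m$ is preceded by $x$ exactly once per period. This is where Proposition~\ref{prop:qxyq} enters: in the semicentral prefix $u_mxyu_m$ the word $u_m$ occurs exactly three times — as a prefix, as a suffix, and at a single internal occurrence, which is preceded by $x$. Counting the occurrences of $u_m$ whose starting position lies in one period $[1,T]$, only the prefix occurrence (preceded by the last letter $y$ of the period) and the internal one (preceded by $x$) qualify, the suffix occurrence already belonging to the next period. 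Hence $xq$ occurs once per period, and since the identity above exhibits one such occurrence ending at position $|u_{m-1}|+T$, every occurrence of $xq$ ends at a position $\equiv|u_{m-1}|\pmod T$. The admissible lengths (those $\ge|xq|$) are therefore exactly $nT+|u_{m-1}|=|(qxy)^{n}u_{m-1}|$ for $n\ge 1$, so each prefix ending in $xq$ coincides with $(qxy)^{n}u_{m-1}$.

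The main obstacle is this completeness step — in particular, pinning the multiplicity of $xq$ to $1$ per period and matching the phase — whereas existence and the algebraic form follow smoothly from~\eqref{eq:u}. One boundary case deserves a separate word: when $q$ is a power of a single letter (so $m=1$ and $p=u_{0}=b^{-1}$ is only a formal word), the shortest prefix $xq$ is itself central and corresponds to the exponent $n=0$; here the statement should be read with $n\ge 0$, or with $p=u_{0}$ interpreted in the free group, while for every genuinely nonempty non-power central word $q$ the factor $xq$ is not a prefix of $(qxy)^{\omega}$ and the characterization with $n\ge 1$ is exact.
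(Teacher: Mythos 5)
Your proof is correct, and for the existence half it takes a genuinely different route from the paper's. The paper works entirely inside the single semicentral word $v=qxyq$: by Proposition~\ref{prop:qxyq} the factor $xq$ occurs exactly once in $v$, the prefix $u$ of $v$ ending there is a complete return to $q$ (hence closed) while $uy$ is open, Theorem~\ref{thm:bdaries} then forces $u$ to be central, and Proposition~\ref{prop:PER} yields the decomposition $u=qxyp=pyxq$, after which the general case follows by telescoping. You instead realize $q=u_m$ inside a standard sequence and exhibit $p=u_{m-1}$ explicitly via the commutation identity~\eqref{eq:u}; this buys an explicit $p$ and bypasses Theorem~\ref{thm:bdaries} and Proposition~\ref{prop:PER} entirely, at the cost of the (true but unargued) facts that both $qab$ and $qba$ are standard and that the order $xy$ can be matched to the parity convention of~\eqref{eq:su} --- both routine, since a standard word of length at least $2$ ends in $ab$ or $ba$ and this determines the parity of its index. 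Your once-per-period counting for completeness is a sharper phrasing of the paper's ``every occurrence of $xq$ is contained in a factor $qxyq$'' and rests on the same Proposition~\ref{prop:qxyq}. Your boundary-case remark is a genuine catch: when $q$ is a power of $x$ (equivalently, when $xq$ is itself a prefix of $(qxy)^\omega$), the shortest prefix ending in $xq$ has length $|q|+1<|qxy|$ and cannot be written as $(qxy)^np$ with $n>0$ and $p\in\Sigma^*$; the paper's own proof stumbles on the same point, since there the complete return $u$ is a power of a single letter and Proposition~\ref{prop:PER} does not supply the word $p$ (formally $p=u_0=b^{-1}$). Do note, though, that the degenerate case is ``$q$ a power of $x$'', not of an arbitrary single letter as you write: for $q=y^k$ with $k\ge 1$ the first occurrence of $xq$ already ends at position $2|q|+1\ge|qxy|$ and the statement holds as written. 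The applications of the lemma in the proof of Theorem~\ref{thm:ocst} never fall into the degenerate case, so neither proof's lapse there propagates.
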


\begin{proof}
Let us consider the semicentral word $v=qxyq$. By Proposition~\ref{prop:qxyq}, $xq$ has exactly one (internal) occurrence in $v$. Therefore, the prefix $u$ of $v$ ending in $xq$ is a complete return to $q$ and hence it is closed, whereas the next prefix $uy$ of $v$ is open since $qy$ is not a prefix of $uy$. By Theorem~\ref{thm:bdaries}, it follows that $u$ is central, so that by Proposition~\ref{prop:PER} there exists a central word $p$ (shorter than $q$) such that $v=qxyp=pyxq$.

Thus, since every occurrence of $xq$ within $(qxy)^\omega$ is contained in a factor $qxyq$, it follows that any prefix of $(qxy)^\omega$ ending in $xq$ can be written as
\[(qxy)^np=(qxy)^{n-1}pyxq=\cdots=p(yxq)^n\,.\]
for some integer $n>0$.
\end{proof}

\begin{proof}[Proof of Theorem~\ref{thm:ocst}]
The ``only if'' part follows from Corollary~\ref{cor:formula}. Let us prove the ``if'' part
by induction on $n$.

For $n=0$, the statement is easily verified.
Let $\oc(w)=1^{k_0}0^{k_0}1^{k_1}0^{k_1}\cdots 1^{k_{n-1}}0^{k_{n-1}}1^{k_{n}}0^{k_{n}}1$ with $n>0$. By induction, we can suppose that the word $w'$ such that $\oc(w')=1^{k_0}0^{k_0}\cdots 1^{k_{n-1}}0^{k_{n-1}}1$ is a prefix of a standard Sturmian word. By Theorem \ref{thm:bdaries},  we can write $w'=qxyqx$, for a central word $q$ and distinct letters $x,y$. 

By Remark~\ref{rem:nth1}, any word $\hat w$ such that $\oc(\hat w)=\oc(w)$ is a complete return to its prefix $u$ of length $\sum_{i=0}^{n}k_i$. Since $|\hat w|=|w|=2\sum_{i=0}^{n}k_i+1$, it follows that $\hat w=u\xi u$ for some letter $\xi$. As $|qxyqx|=|w'|=2\sum_{i=0}^{n-1}k_i+1$, we have $|q|=\sum_{i=0}^{n-1}k_i-1$, so that $q$ is a prefix of $u$.
Now, $\oc(u\xi q)=1^{k_0}0^{k_0}\dotsm 1^{k_{n-1}}0^{k_{n-1}}1^{k_n}$, so that by Lemma~\ref{lem:ce} the word $u\xi q$ has the same period as $w'$ and hence is uniquely determined. This shows that
$\hat w=u\xi u=w$.

Since $qx$ is a prefix of $u$, $u\xi qx$ is a prefix of $u\xi u$. As $\oc(u\xi qx)=1^{k_0}0^{k_0}\dotsm 1^{k_{n-1}}0^{k_{n-1}}1^{k_n}0$, by Lemma~\ref{lem:ce} the period of $u\xi qx$ is different from the one of $u\xi q$, i.e., $|qxy|$. This implies that $\xi=x$, since otherwise $yqx$ would be a suffix of $u\xi qx$, so that $u\xi qx$ would still have period $|qxy|$.

By Lemma~\ref{lem:qocc}, $uxq$ is a central word that can be written as $uxq=(qxy)^jp=p(yxq)^j$ for some $j>0$ and a central word $p$. Hence we obtain $u=p(yxq)^{j-1}y$, so that \[w=uxu=p(yxq)^{j-1}yxp(yxq)^{j-1}y\,.\]

Thus, $p(yxq)^{j-1}=(qxy)^{j-1}p$ is a central word $r$, and $w=ryxry$ is a prefix of the infinite word $(ryx)^\omega$. Therefore, by Lemma~\ref{lem:qocc}, $w$ is a prefix of a central word and hence a prefix of a standard Sturmian word.   The proof is therefore complete.
\end{proof}

%%%%%%%%%%%%%%%%%%%%%%%%%%%%%%%%%%%%%%%%%%%%%%%%%%%%%%%%%%%%%%%%%%%%%%%%%%%%%%%%%%%%%%%%%%%%%%%%%%%%%%%
\section{Algorithms for the oc-sequence}
%%%%%%%%%%%%%%%%%%%%%%%%%%%%%%%%%%%%%%%%%%%%%%%%%%%%%%%%%%%%%%%%%%%%%%%%%%%%%%%%%%%%%%%%%%%%%%%%%%%%%%%

In this section, we show a simple linear-time algorithm that, given a word $w$ over a finite alphabet $\Sigma$, computes its sequence $\oc(w)$, and a linear-time algorithm that, given the sequence $\oc(w)$ of a Sturmian word $w$, reconstructs $w$.

Recall that the \emph{border array} $B(w)$ of the word $w$ is the sequence whose $i$-th entry is the length of the longest border of the prefix of length $i$ of $w$. For example, if $w=abcaacab$, then $B(w)=00011012$. We also define the array $B'(w)$ by $B'(w)[i]=\max_{j\leq i}B(w)[j]$.

\begin{proposition}\label{prop:algo}
Let $w$ be a nonempty word. Then $\oc(w)[1]=1$ and for every $i>0$, $\oc(w)[i]=B'(w)[i]-B'(w)[i-1]$.
\end{proposition}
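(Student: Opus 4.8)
The plan is to reduce the statement to a single \emph{master identity} relating the number of closed prefixes to the array $B'$, namely
\[
|\oc(w)[1]\cdots\oc(w)[i]|_1 = 1 + B'(w)[i] \qquad \text{for every } i\ge 1,
\]
where $|\cdot|_1$ counts the occurrences of the symbol $1$. Once this is established, the proposition follows immediately: since $\oc(w)[i]\in\{0,1\}$ records whether the prefix of length $i$ is closed, the count of $1$s among the first $i$ entries increases by exactly $\oc(w)[i]$ when passing from $i-1$ to $i$, so for $i\ge 2$ we obtain $\oc(w)[i]=(1+B'(w)[i])-(1+B'(w)[i-1])=B'(w)[i]-B'(w)[i-1]$; and for $i=1$ the identity reads $|\oc(w)[1]|_1=1+B'(w)[1]=1$, i.e.\ $\oc(w)[1]=1$, consistent with the fact that every one-letter word is closed.

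The first step is a purely combinatorial lemma: for every $i\ge 1$, $B'(w)[i]$ equals the length $R(i)$ of the longest prefix of $w$ that has at least two occurrences inside $w[1]\cdots w[i]$ (the longest repeated prefix of the length-$i$ prefix). I would prove both inequalities directly. For $R(i)\le B'(w)[i]$: if a prefix $u$ of $w$ occurs twice in $w[1]\cdots w[i]$, then its second occurrence ends at some position $j\le i$ with $j>|u|$, so $u$ is simultaneously a prefix and a proper suffix of $w[1]\cdots w[j]$, hence a border of it; thus $B(w)[j]\ge|u|$ and $B'(w)[i]\ge|u|$. For the reverse, pick $j\le i$ with $B(w)[j]=B'(w)[i]$ and let $v$ be the longest border of $w[1]\cdots w[j]$; if $v\ne\varepsilon$ its occurrences as a prefix and as a suffix start at positions $1$ and $j-|v|+1\ge 2$, hence are two distinct occurrences inside $w[1]\cdots w[i]$, giving $R(i)\ge|v|=B'(w)[i]$ (the case $v=\varepsilon$ being trivial).

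The second step establishes the master identity by counting, using Remark~\ref{rem:nth1}. The prefix of length $1$ is closed, contributing the first $1$; and by Remark~\ref{rem:nth1}, for each $n\ge 1$ the $(n+1)$-st $1$ in $\oc(w)$ sits at the end of the second occurrence of the length-$n$ prefix of $w$. These positions strictly increase with $n$, and the $(n+1)$-st $1$ lies at a position $\le i$ precisely when the length-$n$ prefix already has two occurrences inside $w[1]\cdots w[i]$, i.e.\ precisely when $n\le R(i)$. Hence the $1$s among the first $i$ entries are exactly the first one together with those indexed by $n=1,\dots,R(i)$, so $|\oc(w)[1]\cdots\oc(w)[i]|_1=1+R(i)=1+B'(w)[i]$ by Step~1.

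I expect the main obstacle to be the careful bookkeeping in Step~1, specifically identifying $B'(w)[i]$ with the longest repeated prefix rather than merely the longest border of the length-$i$ prefix: the running maximum in the definition of $B'$ is exactly what captures a repeated prefix whose second occurrence may end strictly before position $i$, and the boundary cases (empty border, single-letter prefix) are where one must be attentive. A minor point worth flagging is the off-by-one at $i=1$: the difference formula $B'(w)[i]-B'(w)[i-1]$ is literally correct for $i\ge 2$, while the value $\oc(w)[1]=1$ is recorded separately (equivalently, it agrees with the formula under the convention $B'(w)[0]=-1$).
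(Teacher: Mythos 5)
Your proof is correct, but it takes a genuinely different route from the paper's. The paper's proof is a single sentence giving a \emph{pointwise} characterization: the prefix of length $i$ is closed if and only if its longest border is longer than the longest border of every shorter prefix, i.e.\ if and only if $B(w)[i]>B'(w)[i-1]$; combined with the standard fact that the border array grows by at most one at each step (so that $B'(w)[i]-B'(w)[i-1]\in\{0,1\}$), this yields the formula immediately. You instead prove the \emph{cumulative} identity $|\oc(w)[1]\cdots\oc(w)[i]|_1=1+B'(w)[i]$, by first identifying $B'(w)[i]$ with the length of the longest repeated prefix of $w[1]\cdots w[i]$ and then locating the $1$s via Remark~\ref{rem:nth1}, and you recover the proposition by differencing. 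Your argument is longer, but it buys two things: it never needs the ``border array increases by at most one'' fact (telescoping handles the $\{0,1\}$-valuedness automatically), and the master identity is itself a clean strengthening of the final assertion of Remark~\ref{rem:nth1}, valid for every prefix rather than only for closed words. Your remark on the $i=1$ boundary is also apt: as literally stated the difference formula requires the convention $B'(w)[0]=-1$, which is exactly the initialization $B[0]:=-1$ used in Algorithm~\ref{alg:oc2s}, and the paper's one-line proof glosses over this.
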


\begin{proof}
By definition of a closed word, $\oc(w)[i]=1$ if and only if the longest border of the prefix of $w$ of length $i$ is longer than the border of any shorter prefix.
\end{proof}
\noindent As an example, for the word $w=abcaacab$ over the alphabet $\{a,b,c\}$, we have $B'(w)=00011112$, and indeed $\oc(w)=10010001$.

Since the border array of a word $w$ can be computed in linear time with respect to the length of $w$ \cite{MoPr70}, Proposition~\ref{prop:algo} gives a linear-time algorithm to compute $\oc(w)$.

\medskip
Suppose now that $w$ is a finite Sturmian word. By Theorem~\ref{theor:main}, it is possible to reconstruct $w$ from $\oc(w)$, up to isomorphisms of the alphabet $\Sigma=\{a,b\}$. In the following, we describe a linear-time algorithm (see Algorithm~\vref{alg:oc2s}) to do this.

Without loss of generality, assume $w[1]=a$. In order to obtain the whole of $w$, it is then
sufficient to calculate the border array $B$, as $B[i]<i$ and
%\begin{equation}
%\label{eq:copiadaB}
$w[i]=w[B[i]]$
%\end{equation}
hold for $i=2,\ldots,|w|$, provided that we extend $w$ to the left by setting $w[0]=b$ (see lines 6 and 21
in the algorithm). Now let $2\leq i\leq |w|$.
\begin{itemize}
\item If $\oc[i]=1$, by Proposition~\ref{prop:algo} it follows that $B[i]=B[i-1]+1$, i.e., the
$i$-th letter is the one that keeps the minimal period fixed (lines 11--12).
\item If $\oc[i]=0$ and $\oc[i-1]=1$, let $x=w[B[i-1]+1]$ and $\Sigma=\{x,y\}$.
We must have $w[i]=y$ since
otherwise $w[1\ldots i]$ would be a complete return to $vx$, with $v=w[1\ldots B[i-1]]$.
We can then recover $B[i]$ by the standard procedure~\cite{MoPr70}; this amounts to searching
for the longest border $u$ of $v$ that is followed by $y$ and then taking
$B[i]=|uy|$, or choosing $B[i]=|\varepsilon|=0$ if no such $u$ exists (lines 14--20).
\item For the next $0$s in a run, that is, when $\oc[i-1]=0$, we know that $v$ is a right
special factor of $w[1\ldots i-1]$, and no other factor of length $|v|$ is right special.
Therefore, up to the next occurrence of $v$, there is a unique letter $\xi$ that extends
$w[1\ldots i-1]$ to a Sturmian word. It is well known that extending a Sturmian
word by keeping the same period results in a Sturmian word (see for instance~\cite[Theorem~10]{DelDel06});
hence, if $v$ is not a suffix of
$w[1\ldots i-1]$, i.e., when $B[i-1]<|v|$, the letter $\xi$ is obtained by letting
$B[i]=B[i-1]+1$ (lines 11--12 again). When $B[i-1]=|v|$ and $\oc[i]=0$, the letter $\xi$ is
necessarily $y$, so that the longest border of $w[1\ldots i]$ is again $uy$ or 
$\varepsilon$ (line 20).
\end{itemize}

\lstset{language=Pascal,columns=flexible,mathescape=true,morekeywords={return},%
texcl,escapechar=`,commentstyle=\itshape\sffamily%}
,numbers=left,numberstyle=\scriptsize,xleftmargin=24pt}

\begin{lstlisting}[float={bth},caption={Function ReconstructSturmianWord.},label=alg:oc2s,%
belowcaptionskip=\medskipamount]
function ReconstructSturmianWord($\oc$)
{Input:		array $\oc=\oc[1\ldots n]$ of some Sturmian word
Output:		the corresponding Sturmian word $w=w[1\ldots n]$ starting with $a$}
begin
    $B$[0] := -1 ;
    $w$[0] := $b$ ;
    $B$[1] := 0 ;
    $w$[1] := $a$ ;
    ones := 0 ;
    for $i$ := 2 to $n$ do
        if $\oc[i]=1$ or $B[i-1]$ < ones then
            $B[i]$ := $B[i-1]+1$ ;
        else
            if $\oc[i-1]=1$ then
                ones := $B[i-1]$ ;
                $x$ := $w$[ones + 1] ;
                $j$ := $B$[ones] ;
                while $j\geq 0$ and $w[j+1] = x$ do
                    $j$ := $B[j]$ ;
            $B[i]$ := $j+1$ ;
        $w[i]$ := $w[B[i]]$ ;
    return $w:=w[1\ldots n]$ ;
end
\end{lstlisting}
%\captionof{figure}{Caption}

% 	{Conventional value added to border sequence $B$}
%			{Used for `\eqref{eq:copiadaB}` but not part of the output}
%		{Number of ones in $\oc[2\ldots i]$, only updated after each run}
%			{first $0$ in a run}
%			{keep the same period}
%		{finds $j=|u|$, with $uy$ border and $\{x,y\}=\Sigma$}
%			{before next $1$, $u$ is always followed by $y$}

%%%%%%%%%%%%%%%%%%%%%%%%%%%%%%%%%%%%%%%%%%%%%%%%%%%%%%%%%%%%%%%%%%%%%%%%%%%%%%%%%%%%%%%%%%%
\section{Conclusion and open problems}
%%%%%%%%%%%%%%%%%%%%%%%%%%%%%%%%%%%%%%%%%%%%%%%%%%%%%%%%%%%%%%%%%%%%%%%%%%%%%%%%%%%%%%%%%%%

In this paper we focused on the oc-sequence of a  word and exhibited results showing connections between this sequence and the combinatorics of the word. We mostly focused on Sturmian words, since these are characterized by their oc-sequence. Nevertheless, we believe that it may be interesting to also look at other classes of words. For example, in the case of the Tribonacci word $T=abacabaabacababacabaabac\cdots$, the sequence of the lengths of the runs of $1$ in $\oc(T)$ is exactly the Tribonacci sequence. We observed several regularities also in the oc-sequence of the Thue-Morse word, as well as in that of the regular paperfolding word. 

Another interesting problem is to understand, given a binary array $A$, whether there exists a word $w$ such that $\oc(w)=A.$ Some of the results in this paper provide necessary conditions, but the problem in general remains open. 

%%%%%%%%%%%%%%%%%%%%%%%%%%%%%%%%%%%%%%%%%%%%%%%%%%%%%%%%%%%%%%%%%%%%%%%%%%%%%%%%%%%%%%%%%%%
%\section*{Acknowledgments}
%%%%%%%%%%%%%%%%%%%%%%%%%%%%%%%%%%%%%%%%%%%%%%%%%%%%%%%%%%%%%%%%%%%%%%%%%%%%%%%%%%%%%%%%%%%

%\bibliographystyle{acm}
%
%\bibliography{biblio}

\end{document}